\newcommand{\Rf}{{\mathbb R}}
\newcommand{\Kn}{{\mathbb K}}
\newtheorem{theorem}{Theorem}
\newtheorem{corollary}{Corollary}
\newtheorem{lemma}{Lemma}
\newenvironment{proof}{\begin{trivlist} \item[{ \bf Proof:}] }
{~\hfill$\Box$ \end{trivlist} }
\begin{document}

\begin{frontmatter}

\title{Multivariable Generalized Super-Twisting Algorithm Robust Control of Linear Time-Invariant Systems}

\thanks[footnoteinfo]{This paper was not presented at any IFAC 
meeting. Corresponding author Eduardo V. L. Nunes.}

\author[UNICAMP]{Jos\'{e} C. Geromel}\ead{geromel@dsce.fee.unicamp.br}, 
\author[UFRJ]{Eduardo V. L. Nunes}\ead{eduardo.nunes@coppe.ufrj.br},       
\author[UFRJ]{Liu Hsu}\ead{lhsu@coppe.ufrj.br}  

\address[UNICAMP]{School of Electrical and Computer Engineering, UNICAMP, SP-Brazil}                                             
\address[UFRJ]{Department of Electrical Engineering, COPPE, Federal University of Rio de Janeiro, RJ-Brazil}   
 
\begin{keyword}                         
Multivariable generalized super-twisting algorithm, LMI-based design, finite-time convergence analysis.        
\end{keyword}                             

\begin{abstract}              

This paper presents a novel procedure for robust control design of linear time-invariant systems using a Multivariable Generalized Super-Twisting Algorithm (MGSTA). The proposed approach addresses robust stability and performance conditions, considering convex bounded parameter uncertainty in all matrices of the plant state-space realization and Lipschitz exogenous disturbances. The primary characteristic of the closed-loop system, sliding mode finite-time convergence, is thoroughly examined and evaluated. The design conditions, obtained through the proposal of a novel max-type non-differentiable piecewise-continuous Lyapunov function are formulated as Linear Matrix Inequalities (LMIs), which can be efficiently solved using existing computational tools. A fault-tolerant MGSTA control is designed for a mechanical system with three degrees of freedom, illustrating the efficacy of the proposed LMI approach.

\end{abstract}

\end{frontmatter}

\section{Introduction}

The super-twisting algorithm (STA)  \cite{Levant:1993}, \cite{Shtessel_book:2014} is a powerful second-order sliding mode control (SMC) algorithm that reduces chattering by hiding the discontinuous control of conventional SMC \cite{U:92} behind an integrator. This ingenious approach allows for continuous control input, exact compensation of Lipschitz disturbances, and finite-time convergence. The STA has been a prominent subject of research in the SMC community, spurring numerous theoretical and applied studies. Its outreach has been remarkable, being applicable not only for control but also for achieving robust exact differentiators \cite{L:2003}, \cite{Levant2007}  and finite-time observers \cite{Davila2005}, \cite{Floquet2007}, even with unknown inputs.

In \cite{Moreno:2009}, a generalized super-twisting algorithm (GSTA) was introduced to also handle state-dependent perturbations, providing additional theoretical advances. In this paper, we consider uncertain multivariable linear systems. The extension of the scalar STA to the multivariable case (MSTA) follows the original idea introduced in \cite{NE:2014} based on the unit vector switching function.

Several authors have studied various multivariable STA formulations. In \cite{VNH_tac:2017}, a multivariable extension of the scalar variable gain generalized STA from \cite{GMF:2012}, was developed, exploiting key properties of the Jacobian matrix. Further extensions related to the Jacobian matrix are found in \cite{Caamal_Moreno:2019} and  \cite{MROF:2022}. A challenging issue arises when the input matrix, say $B$, is not known, as assumed in \cite{Caamal_Moreno:2019}, \cite{NE:2014} and \cite{VNH_tac:2017}. To address the case of unknown $B$, sufficient conditions were derived in \cite{VNH:2016} to ensure global finite-time stability when $B$ is symmetric and positive definite. This was extended in \cite{KNH:2019}, showing stability for small asymmetry. The MSTA’s sensitivity to changes in the input matrix and the challenge of selecting gains were highlighted. A generalized MSTA was proposed in \cite{MROF:2022} for systems with time-varying and state-dependent input matrices and disturbances, requiring a positive definite symmetric part of $B$.

Only scalar gains were considered in \cite{KNH:2019}, \cite{MROF:2022}, \cite{VNH:2016}. Selecting full matrix gains for the MSTA is challenging. This problem was addressed in \cite{MGM:2024} for systems with time and state-varying uncertain input matrices, establishing  sufficient conditions for stability with full-matrix control gains using a smooth Lyapunov function. However, no systematic method for computing these gains was provided.

In \cite{gero:2023}, MSTA design techniques were generalized for full-matrix gains synthesis. A new LMI-based design was proposed to handle convex bounded model uncertainties in the input matrix and exogenous disturbances with norm-bounded time derivative. Gains can then be designed based on cost weights related to transient and control effort, with an additional parameter for the disturbance bound. 

The results of \cite{gero:2023} were extended in 
\cite{NGH:2024} by incorporating a linear state-dependent term in the MSTA and addressing convex bounded parameter uncertainty in both the state and the input matrices. Additionally, the multivariable extension was not based on the Jacobian matrix but rather on an appropriate scalar function, which can be particularly important for systems with uncertain input matrices. This novel formulation appears to simplify the theoretical analysis. However, this approach requires the number of states to match the number of inputs, significantly limiting the class of linear systems that can be addressed. 

In this paper, we build upon the findings of \cite{NGH:2024} by proposing a novel LMI-based generalized MSTA design for controlling more general linear MIMO plants with internal dynamics, under the assumption of convex bounded parameter uncertainty in all system matrices. The inclusion of internal dynamics significantly increases the complexity of the problem. To address this challenge, we decompose the system into two subsystems and introduce a novel non-differentiable piecewise-continuous Lyapunov function that innovatively combines the Lyapunov functions of each subsystem whose stability conditions emerge from the determination of the one-sided directional derivative \cite{ShePa:1994}. The control design achieves robust performance through the optimization of a guaranteed performance index, ensuring robust global stability. All problems are numerically handled by LMI solvers available in the literature. We develop a fault-tolerant MGSTA controller for a three-degree-of-freedom mechanical system, demonstrating the practical applicability and robustness of the proposed LMI-based approach.

The paper is organized as follows. In the next section, the problem to be dealt with, including convex bounded parameter uncertainty and time derivative norm bounded exogenous uncertainty is presented. In Section $3$, the system structure, composed of the feedback connection of a linear and a nonlinear subsystems is highlighted. The Lyapunov function candidates for each subsystem as well as the composite one that works for the overall system are given. Section $4$, is entirely devoted to robust performance analysis with respect to  exogenous Lipschitz disturbance. In Section $5$, a one-to-one change of variables is used to express the control design conditions as LMIs which makes it possible to solve them by well-known numerical tools. Finite-time convergence is analyzed in Section $6$. In Section $7$, the theory is successfully illustrated by designing a fault-tolerant control of a chain of trailers. Numerical simulations are presented and discussed. Finally, the main conclusions, recommendations and possible follow-up are summarized in Section $8$. 

\subsection{Notation}

For real vectors or matrices, $(^T)$ refers to their transpose. For symmetric matrices, $(\bullet)$ denotes the symmetric block. The symbols $\mathbb{R}$,  $\mathbb{R}_+$, $\mathbb{N}$ and $\Kn$ denote the sets of real, real non-negative, natural numbers and $\Kn=\{1, 2, \cdots, N\}$ with $N \in \mathbb{N} \setminus \{0\}$, respectively. For any symmetric matrix, $X>0$ ($X \geq 0$) denotes a positive (semi-) definite matrix. As usual, for real matrices the symbol $\| \cdot \|$ denotes the maximum singular value and for real vectors, the same symbol denotes the Euclidean norm. The unit simplex $\Lambda \subset \Rf^N$ stands for the set of all nonnegative vectors with the sum of components equal to one. The convex hull of matrices $\{B_i\}_{i \in \Kn}$, with the same dimensions, is denoted by $\text{co} \{B_i\}_{i \in \Kn}$. Finally, $I_n$ denotes the $n \! \times \! n$ identity matrix. Filippov’s definition for the solution of discontinuous differential equations is adopted \cite{F:88}. Time derivative, whenever exists, is denoted as $\dot f(t) = df(t)/dt$. To ease the notation we drop the time argument whenever clear. Finally, $I_n$ denotes the $n \times n$ identity matrix. 

\section{Problem statement}

Consider an uncertain MIMO system in regular form, described by 
\begin{equation}
	\label{eq:main_system}
	\begin{aligned}
		\dot \zeta & = A \zeta + E \sigma \\
		\dot \sigma & = C \zeta  +  D \sigma +  B u + f(t)
	\end{aligned}
\end{equation}
where $\zeta \in \mathbb{R}^{r}$, $\sigma \in \mathbb{R}^n$ are the state variables, $u \in \mathbb{R}^m$ is the control input, $B \in \Rf^{n \times m}$  is the input matrix, and $f(t): \Rf_+ \to \Rf^n$ is the uncertain disturbance. With respect to it, the following assumptions are made:
\begin{description}
    \item[(A1)] The constant matrices $(A, E, C, D, B)$ of compatible
    dimensions are uncertain. They are elements of the compact convex polyhedral set $\mathbb{M}$ generated by the convex combination of the vertices, that is $\mathbb{M} =  \text{co} \{ (A, E, C, D, B)_i \}_{i \in \mathbb{K}}$. 

    \item[(A2)] Each uncertain matrix $A \in \mathbb{M}$ is Hurwitz.

    \item[(A3)] Each uncertain matrix $B \in {\mathbb{B}} = { \text{co}}(B_i)_{i \in \mathbb{K}}$ with $B_i \in \Rf^{n \times m}, \forall i \in \Kn$ is full row rank. 

    \item[(A4)] The disturbance $f(t)$ is a smooth function, satisfying the following inequality
    \begin{align}
    \label{eq02}
    \| \dot{f}(t) \| \leq \delta, \quad \forall t \geq 0
    \end{align}
    
where $\delta$ is a positive constant.
\end{description}

Assumption (A1) implies that any element of $\mathbb{M}$ is given by $(A, E, C, D, B) \!=\! \sum_{i \in \mathbb{K}} \lambda_i( A, E, C, D, B)_i$ for some $\lambda \in \Lambda$. Assumption (A2) ensures that the zero dynamics given by $\dot{\zeta} = A \zeta$ is globally exponentially stable. Notice that this condition is usual since the sliding surface can be appropriated defined to ensure that this assumption is fulfilled, see \cite{U:92} and the example given in Section $7$. Assumption (A3) leads to ${\rm rank}(B) = n, \forall B \in \mathbb{B}$. The last assumption is usually adopted in the framework of MSTA control design, see \cite{gero:2023}.

\section{Robust MGSTA Lyapunov control design}

The main purpose of this paper is to provide an LMI-based procedure for MGSTA controller design, extending its applicability to a broader class of linear multivariable dynamic systems. 

Naturally, we consider that system (\ref{eq:main_system}) can be viewed as a feedback connection of the linear, time-invariant subsystem $\mathcal{S}_L$, with state space realization
\begin{align} 
\label{eq17} \dot \zeta & = A\zeta + E \sigma \\
\label{eq18} y & = C \zeta  + D \sigma 
\end{align} 
and the nonlinear subsystem $\mathcal{S}_{N}$ that emerges from the synthesis of the MSTA control, that is 
\begin{align} 
\label{eq19} \dot \sigma & = y + B u + f \\
\label{eq21}   \dot \eta &  = c(\sigma)^2 \sigma \\
\label{eq20}   u & = K_0 \zeta + c(\sigma) K_1 \sigma + K_2 \eta  
\end{align}
The matrices $K_1, K_2 \in \mathbb{R}^{m \times n}$
are the MSTA control gains, while $K_0 \in \mathbb{R}^{m \times r} $
is an additional gain introduced to facilitate the handling of internal dynamic coupling. 
{\color{black}{The positive scalar function $c(\sigma)$ is given by}}
 \begin{align} \label{eq22}
c(\sigma) = \frac{1}{\sqrt{\|\sigma\|}}  + \alpha ~:~\mathbb{R}^n \rightarrow \mathbb{R}_+, \ \ \ \alpha > 0 
\end{align}
and is well-defined and differentiable for all $\sigma \neq 0$.

Assuming, with no loss of generality, see \cite{gero:2023}, that matrix $BK_2 \in \Rf^{n \times n}$ is non singular, applying the change of variables 
\begin{equation} \label{eq23}
z = \eta +  (B K_2)^{-1} f
\end{equation}
it is readily verified that, defining the state variable $x =[\sigma^T~z^T]^T \in \Rf^{2n}$, the nonlinear subsystem $\mathcal{S}_{N}$, has the state space realization
\begin{align}
\label{eq24}  \dot x & =  L(\sigma) A_K R(\sigma) x + L(\sigma) F_0 w + G_0^T y  \\
\label{eq25} \sigma & = G_0x
\end{align}
where $w(t) = c(\sigma)^{-1} (BK_2)^{-1} \dot f(t)$. Moreover, the matrices $L(\sigma) = {\textrm{diag} }(I_n,c(\sigma)I_n )$, $R(\sigma) = {\textrm{diag}}(c(\sigma) I_n,I_n)$ are diagonal and well-defined for all $\sigma \neq 0$, $F_0 = [0~I_n]^T \in \Rf^{2n \times n}$, $G_0 = [I_n~0] \in \Rf^{n \times 2n}$ and, as expected, the equality $A_K =A_0 + B_0 K$ is established by denoting
\begin{align}
	\label{eq26}  A_0 \!=\! \left [ \begin{array}{cc} 0 & 0 \\ I_n & 0 \end{array} \right ],~B_0 \!=\! \left [ \begin{array}{c} B \\ 0  \end{array} \right ],~K \!=\! \left [ \begin{array}{cc} K_1 & K_2 \end{array} \right ]
\end{align}
It is important to stress that the factorization leading to matrices $L(\sigma)$ and $R(\sigma)$ are useful for analysis purposes. The solution of \eqref{eq24}
are understood in the sense of Filippov \cite{F:88}. Furthermore, considering all these equations as a whole, eventual difficulties of a possible lack of definition for $\sigma = 0$ do not occur. Note that the disturbance $w \in \Rf^n$  has an important property, namely: for a given $\gamma>0$, if the norm bounded constraint \eqref{eq02} is satisfied for $\delta = \|(BK_2)^{-1}\|^{-1} \gamma^{-1}$, then $w^Tw \leq  c(\sigma)^{-2} \gamma^{-2}$. Notice that to take into account the control gain $K_0$, instead of (\ref{eq18}), the output $y$ becomes 
\begin{align} \label{eqynew}
y = C_{0}\zeta + D \sigma
\end{align}
with $C_{0} = C + B K_0$. Now, let us introduce the following controlled output variable $ \xi = (H + J K_0) \zeta$, where the matrices $H \in \Rf^{q \times r}$ and $J \in \Rf^{q \times m}$ select and weight each state and control variables (or a linear combination of them) constituting the final cost. Thus, as in the classical quadratic optimal control, we can interpret $H$ as the way to build up an appropriate cost associated with the state transient, prior to reach the sliding surface and during the sliding motion. Matrix $J$ penalizes the control effort. Hence, by choosing $H$ and $J$ it is possible to restrict the transient velocity for each component of the state variable of the linear subsystem $\mathcal{S}_L$ keeping the control gain within reasonable bounds. This cost definition and its consequence will be addressed with more details afterwards. 

\subsection{Lyapunov function candidates}

The linear subsystem $\mathcal{S}_L$ as well as the nonlinear subsystem $\mathcal{S}_{N}$ are analysed from the viewpoint of two different Lyapunov functions. The difficulty of obtained, from them, a Lyapunov function candidate to their feedback connection is investigated and discussed afterwards. For the moment, let us first introduce the Lyapunov function candidates, namely 
\begin{align}
	\label{eq27}  v(\zeta) = \zeta^T S \zeta 
\end{align}
with $S \in \Rf^{r \times r}$ being a symmetric, positive definite matrix associated to $\mathcal{S}_L$ and 
\begin{align}
	\label{eq28}  V(x) = x^TR(\sigma) P R(\sigma) x
\end{align}
with $P \in \Rf^{2n \times 2n}$ being a symmetric, positive definite matrix, associated to $\mathcal{S}_N$. It is readily seen that there exist positive scalars $\varepsilon_L$ and $\theta_L$ such that $\varepsilon_L \|\zeta\|^2 \leq v(\zeta) \leq \theta_L \|\zeta\|^2$, for all $\zeta \in \Rf^r$. Likewise, there exist positive scalars $\varepsilon_N$ and $\theta_N$ such that $\varepsilon_N \|R(\sigma) x\|^2 \leq V(x) \leq \theta_N \|R(\sigma) x\|^2$, for all $x \in \Rf^n$. Together, these bounds imply that both functions are radially unbounded.

\subsection{The linear subsystem $\mathcal{S}_L$}

Since $\mathcal{S}_L$ is a time-invariant linear system, defined by the state space equations (\ref{eq17}) and (\ref{eqynew}), it is natural to consider the quadratic Lyapunov function candidate (\ref{eq27}) whose time derivative with respect to an arbitrary trajectory is given by 
\begin{align}
	\label{eq29}  \dot v(\zeta) & = 2 \zeta^T S (A \zeta + E \sigma) \nonumber \\
 & = \zeta^T (A^T S + S A ) \zeta + 2 \zeta^T S E G_0 x \,,
\end{align}
where we have used (\ref{eq25}). The next lemma is an instrumental result to be used in the sequel. 
\begin{lemma} \label{lemma01}
If the matrix inequality
\begin{align} 
\label{eq30} \left [ \begin{array}{cc}  A^TS + S A + \rho S & \bullet \\ \alpha^{-1} G_0^TE^TS  & - \rho P \end{array} \right ] < 0
\end{align}
is satisfied for scalars $\alpha>0$, $\rho>0$, and matrices $S>0$, $P>0$ of appropriate dimensions, then the inequality
\begin{align} 
\label{eq31} \dot v & < \rho  ( V - v )
\end{align}
holds, for all $0 \neq \zeta \in \Rf^r$ and $0 \neq x \in \Rf^{2n}$.
\end{lemma}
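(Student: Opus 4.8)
The plan is to start from the expression (\ref{eq29}) for $\dot v$ and reduce the claimed bound (\ref{eq31}) to a single sign condition on a quadratic form that hypothesis (\ref{eq30}) directly controls. Writing $\dot v - \rho(V-v) = \dot v + \rho v - \rho V$ and inserting $v = \zeta^T S\zeta$ together with $V = x^TR(\sigma)PR(\sigma)x$ from (\ref{eq28}) gives
\[ \dot v + \rho v - \rho V = \zeta^T(A^TS + SA + \rho S)\zeta + 2\zeta^T SEG_0 x - \rho\,x^T R(\sigma)PR(\sigma)x, \]
so it suffices to show this expression is strictly negative for every $\zeta \neq 0$, $x \neq 0$ (with $\sigma = G_0x \neq 0$, where $c(\sigma)$ and $R(\sigma)$ are well defined).

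First I would introduce the rescaled variable $\tilde x = R(\sigma)x$, so the last term collapses to $\rho\,\tilde x^TP\tilde x = \rho V$. Since $G_0 = [I_n~0]$ and $R(\sigma) = \mathrm{diag}(c(\sigma)I_n,I_n)$, one has $G_0R(\sigma) = c(\sigma)G_0$, hence $G_0 x = c(\sigma)^{-1}G_0\tilde x$; substituting this turns the cross term into $2c(\sigma)^{-1}\zeta^T SEG_0\tilde x$, and the target inequality reads
\[ \zeta^T(A^TS + SA + \rho S)\zeta + 2c(\sigma)^{-1}\zeta^T SEG_0\tilde x - \rho\,\tilde x^TP\tilde x < 0. \]
The decisive step is to recognize this left-hand side as the quadratic form of the matrix in (\ref{eq30}) evaluated at the vector $[\zeta^T~w^T]^T$ with the particular choice $w = (\alpha/c(\sigma))\,\tilde x$: indeed the cross contribution $2\alpha^{-1}\zeta^T SEG_0 w = 2c(\sigma)^{-1}\zeta^T SEG_0\tilde x$ matches exactly, while the $(2,2)$ contribution becomes $-\rho\,w^TPw = -(\alpha/c(\sigma))^2\rho\,\tilde x^TP\tilde x$. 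By (\ref{eq30}) this quadratic form is strictly negative whenever $[\zeta^T~w^T]^T \neq 0$, which holds here because $\zeta \neq 0$.

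What remains, and what I expect to be the only real obstacle, is that the LMI supplies the $P$-term with coefficient $(\alpha/c(\sigma))^2$ rather than the $1$ demanded by the target inequality. This gap is closed by the defining property of $c$: since $c(\sigma) = \|\sigma\|^{-1/2} + \alpha \geq \alpha$, we have $0 < \alpha/c(\sigma) \leq 1$, hence $(\alpha/c(\sigma))^2 \leq 1$, and because $\tilde x^TP\tilde x \geq 0$ this yields $-\rho\,\tilde x^TP\tilde x \leq -(\alpha/c(\sigma))^2\rho\,\tilde x^TP\tilde x$. Chaining this domination with the strict negativity of the quadratic form shows the target left-hand side is bounded above by a strictly negative quantity, hence is itself strictly negative, which is precisely (\ref{eq31}). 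The constant $\alpha^{-1}$ appearing in (\ref{eq30}) is thus exactly tuned to the worst-case (smallest) value of $c(\sigma)^{-1}$, so this scaling-and-domination argument is the heart of the proof, with everything else being algebraic bookkeeping.
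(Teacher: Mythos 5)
Your proof is correct and follows essentially the same route as the paper: both hinge on testing the LMI (\ref{eq30}) against the vector built from $\zeta$ and $R(\sigma)x$ and on the fact that $c(\sigma)\geq\alpha$ makes $\alpha^{-1}$ the worst case of $c(\sigma)^{-1}$. The only cosmetic difference is that the paper absorbs this scaling through a Schur-complement step (replacing $\alpha^{-1}$ by $c(\sigma)^{-1}$ in the LMI) whereas you absorb it by rescaling the test vector and dominating the $-\rho P$ block using $P>0$; the two mechanisms are equivalent.
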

\begin{proof}
The Schur Complement with respect to the second row and column of (\ref{eq30}) together with $0 < c(\sigma)^{-2} < \alpha^{-2}$ shows that 
\begin{align} 
\label{eq32} \left [ \begin{array}{cc}  A^TS + S A + \rho S & \bullet \\ c(\sigma)^{-1} G_{0}^TE^TS  & - \rho P \end{array} \right ] < 0
\end{align}
holds as well. Multiplying it to the left by $[\zeta^T~x^TR(\sigma)]$ and to the right by its transpose, taking into account that $c(\sigma)^{-1}G_{0}R(\sigma) x = G_{0}x$, (\ref{eq32}) yields
\begin{align} 
\label{eq33} \dot v & = \zeta^T (A^T S + S A ) \zeta + 2 \zeta^T S E G_{0}x \nonumber \\
& = \zeta^T (A^T S + S A ) \zeta + 2 c(\sigma)^{-1}\zeta^T S E G_{0}R(\sigma) x \nonumber \\
& < - \rho \zeta^T S \zeta + \rho x^TR(\sigma)PR(\sigma) x
\end{align}
which reproduces (\ref{eq31}). The proof is concluded. 
\end{proof} 

It is interesting to observe that a necessary condition for the feasibility of the inequality (\ref{eq30}) is matrix $A$ to be Hurwitz stable. This means that the isolated (with $\sigma=0$) subsystem $\mathcal{S}_L$ must be asymptotically stable. Or more generally, in the presence of parameter uncertainty, it must be robustly stable. These conditions are ensured by
Assumption (A3). In a practical setting, such conditions can be achieved through a proper sliding surface design, as illustrated in Section~\ref{trailer}. Finally, it is also important to point out that (\ref{eq31}) imposes that $\dot v(\zeta)<0$ only in a subset and not for all $(\zeta, x) \in \Rf^r \times \Rf^{2n}$.

\subsection{The nonlinear subsystem $\mathcal{S}_N$}

We are now in position to deal with the nonlinear subsystem $\mathcal{S}_N$, defined by equations (\ref{eq24})-(\ref{eq25}). The algebraic manipulations are much more involved and some difficulties must be overcome. In order to perform the time derivative of $V(x)$ along an arbitrary trajectory of $\mathcal{S}_N$, let us determine
\begin{align}
	\label{eq34} R(\sigma) \dot x & = R(\sigma) \big ( L(\sigma) A_K R(\sigma) x + L(\sigma) F_0 w  +  G_0^T y \big ) \nonumber \\
	& = c(\sigma) \big ( A_K R(\sigma)x + F_0 w + G_0^T y \big )
\end{align}
and also, see \cite{gero:2023}
\begin{align} \label{eq35}
\dot R(\sigma) x  & =  - \left ( \frac{1}{\sqrt{\|\sigma\|}} \right ) E_0 \Pi_\sigma \big ( BK R(\sigma) x + y \big )
\end{align}
in which expression $E_0 = [(1/2)I_n~0]^T \in \Rf^{2n \times n}$ and $\Pi_\sigma = (\sigma /  \|\sigma\|) (\sigma /  \|\sigma\|)^T \in \Rf^{n \times n}$ is a rank-one matrix with the key property $ 0 \leq \Pi_\sigma \leq I_n$ for all $0 \neq \sigma \in \Rf^n$. Hence, with (\ref{eq34})-(\ref{eq35}) and performing the factorization of the output variable $y$, the time derivative of the Lyapunov function along with an arbitrary trajectory of $\mathcal{S}_N$, can be expressed as 
\begin{align}
\label{eq36} \dot  V(x) & = 2 c(\sigma) x^T R(\sigma)P \big ( A_K R(\sigma)x + F_0 w \big )  \nonumber \\
& ~~~~ - \left ( \frac{2}{\sqrt{\|\sigma\|}} \right ) x^TR(\sigma)P E_0 \Pi_\sigma  BK R(\sigma) x \nonumber \\
& ~~~~ + 2 c(\sigma)  x^TR(\sigma) P \Gamma_\sigma y
\end{align}
where, as it can be verified, the matrix $\Gamma_\sigma$ is given by $\Gamma_\sigma = G_0^T - \rho_\sigma E_0 \Pi_\sigma$ with $\rho_\sigma  = 1 / \big ( c(\sigma)\sqrt{\|\sigma\|} \big )$. Fortunately, from the definition (\ref{eq22}), the scalar variable is such that $\rho_\sigma \in (0, 1)$ for all $\sigma \neq 0$ which allows the determination of the upper bound  
\begin{align}
\label{eq37} 0 & \leq  \Gamma_\sigma \Gamma_\sigma^T   \nonumber \\
& = \left [ \begin{array}{c} I_n - (\rho_\sigma/2) \Pi_\sigma \\ 0 \end{array} \right ] \left [ \begin{array}{c} I_n - (\rho_\sigma/2) \Pi_\sigma \\ 0 \end{array} \right ]^T \nonumber \\
& \leq G_0^T G_0
\end{align}
valid for all $\sigma \neq 0$. In addition, following the same path, simple algebraic manipulations establish the useful upper bound for the disturbance, that is 
\begin{align}
\label{eq38} w^Tw  & \leq  c(\sigma)^{-2} \gamma^{-2}    \nonumber \\
& =  \left ( \frac{c(\sigma)^{-2} \gamma^{-2}}{c(\sigma)^2 \|\sigma\|^2} \right ) x^T R(\sigma)G_0^TG_0R(\sigma)x \nonumber \\
& = \left ( \rho_\sigma^4 \gamma^{-2} \right )  x^T R(\sigma)G_0^TG_0R(\sigma)x \nonumber \\
& \leq \gamma^{-2} x^T R(\sigma)G_0^TG_0R(\sigma)x
\end{align} 
Based on these algebraic relationships we are in a position to calculate a suitable upper bound to the time derivative of the Lyapunov function candidate that follows from (\ref{eq36}). Indeed, usual reasoning indicates that the upper bound 
\begin{align}
\label{eq39} -PE_0\Pi_\sigma BK & - K^T B^T \Pi_\sigma E_0^T P \nonumber \\
& \leq \mu PE_0E_0^TP + \mu^{-1}K^TB^TBK
\end{align}
holds for all $\mu > 0$, and together with (\ref{eq38}), the inequality
\begin{align}
\label{eq40} 2x^TR(\sigma)PF_0w & \leq x^TR(\sigma)\Big ( \beta PF_0F_0^TP \nonumber \\
& ~~~~~~ + \beta^{-1} \gamma^{-2} G_0^TG_0 \Big ) R(\sigma) x
\end{align}
holds for all $\beta >0$. Finally the inequality
\begin{align}
\label{eq41} 2x^TR(\sigma)P\Gamma_\sigma y & = 2x^TR(\sigma)P\Gamma_\sigma (C_{g0}\zeta + D_g \sigma ) \nonumber \\
& \leq \pi x^TR(\sigma) PG_0^TG_0P  R(\sigma) x \nonumber \\ & ~~~+ \pi^{-1} \sigma^TD_g^TD_g \sigma \nonumber \\
& ~~~+ \kappa  x^TR(\sigma)PG_0^TG_0PR(\sigma)x \nonumber \\ & ~~~+   \kappa^{-1} \zeta^TC_{g0}^TC_{g0}\zeta
\end{align}
also holds for all scalars $\pi>0$ and $\kappa>0$. Hence, taking into account that $\sigma = c(\sigma)^{-1} G_{0} R(\sigma)x$, these bounds together with inequality (\ref{eq36}) yield
\begin{align}
\label{eq42} \dot  V(x) & \leq  c(\sigma) x^T R(\sigma) \Big ( A_K^T P + P A_K + \mu PE_0E_0^TP \nonumber \\
& + \mu^{-1}K^TB^TBK + \beta PF_0F_0^TP \nonumber \\
& + \beta^{-1} \gamma^{-2} G_0^TG_0 + \pi PG_0^TG_0P  \nonumber \\
& + \pi^{-1} \alpha^{-2} G_{0}^TD_g^TD_gG_{0} + \kappa PG_0^TG_0P  \Big ) R(\sigma)x \nonumber \\
&  + c(\sigma) \kappa^{-1} \zeta^TC_{g0}^TC_{g0}\zeta
\end{align}
This inequality, which does not impose $\dot V(x)$ negative for all $(\zeta, x) \in \Rf^r \times \Rf^{2n}$ is on the core of the stability conditions of the whole system composed by the feedback connection of the linear subsystem $\mathcal{S}_L$ and the nonlinear subsystem $\mathcal{S}_N$. The result stated in the next lemma is a consequence of the inequality (\ref{eq42}). 
\begin{lemma} \label{lemma02}
Let the control gains $K_0$ and $K$ be given. If the matrix inequalities
\begin{equation}
	\begin{aligned}
		\label{eq43}  & A_K^T P + P A_K + P \left [ \begin{array}{c} E_0^T \\ F_0^T \\ G_0 \\ G_0 \end{array} \right ]^T Z_d \left [ \begin{array}{c} E_0^T \\ F_0^T \\ G_0 \\ G_0 \end{array} \right ] P + \\ & + \left [ \begin{array}{c} B K \\ \gamma^{-1} G_0 \\  \alpha^{-1}D_gG_{0} \\ 0 \end{array} \right ]^T Z_d^{-1} \left [ \begin{array}{c} BK \\ \gamma^{-1}G_0 \\ \alpha^{-1}D_gG_{0} \\ 0 \end{array} \right ] + \rho P < 0
	\end{aligned}
\end{equation}
and
\begin{equation}
	\begin{aligned}
		\label{eq44}  \left [ \begin{array}{cc} \rho S &  \bullet \\ C_{g0} & \kappa I_n  \end{array} \right ] > 0
	\end{aligned}
\end{equation}
are satisfied for scalars $\alpha >0$, $\gamma >0$, $\rho>0$, a four-block matrix $Z_d = {\rm diag}(\mu I_n, \beta I_n, \pi I_n, \kappa I_n)>0$, and matrices $S>0$, $P>0$ of appropriate dimensions then the inequality
\begin{align}
		\label{eq45}  \dot V < \rho c(\sigma) ( v - V )
\end{align}
holds, for all $0 \neq \zeta \in \Rf^{r}$ and $0 \neq x \in \Rf^{2n}$.
\end{lemma}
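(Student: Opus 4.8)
The plan is to start from the upper bound (\ref{eq42}) on $\dot V(x)$ and show it never exceeds $\rho c(\sigma)(v - V)$. Since $c(\sigma) > 0$ wherever it is defined, I would first divide the target inequality (\ref{eq45}) by $c(\sigma)$, so that the claim reduces to the purely algebraic (time-frozen) statement
\begin{align}
x^T R(\sigma)(M + \rho P) R(\sigma) x + \kappa^{-1} \zeta^T C_{g0}^T C_{g0} \zeta - \rho \zeta^T S \zeta < 0 \nonumber
\end{align}
where $M$ denotes the matrix enclosed in the large parentheses of (\ref{eq42}) and I have used $v = \zeta^T S \zeta$ and $V = x^T R(\sigma) P R(\sigma) x$. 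The key observation is that this separates cleanly into an $x$-dependent quadratic form, which (\ref{eq43}) will control, and a residual $\zeta$-dependent quadratic form, which (\ref{eq44}) will control.

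The central bookkeeping step is to recognize that (\ref{eq43}) is exactly the assertion $M + \rho P < 0$. To see this, I would expand the two structured terms of (\ref{eq43}) using the diagonal structure $Z_d = \mathrm{diag}(\mu I_n, \beta I_n, \pi I_n, \kappa I_n)$: the first (congruence-type) term produces $\mu P E_0 E_0^T P + \beta P F_0 F_0^T P + (\pi + \kappa) P G_0^T G_0 P$, while the $Z_d^{-1}$ term produces $\mu^{-1} K^T B^T B K + \beta^{-1}\gamma^{-2} G_0^T G_0 + \pi^{-1}\alpha^{-2} G_0^T D_g^T D_g G_0$. Adding $A_K^T P + P A_K + \rho P$ reproduces $M + \rho P$ term by term, noting that the summand $\kappa^{-1}\zeta^T C_{g0}^T C_{g0}\zeta$ of (\ref{eq42}) is the only $\kappa$-contribution that does \emph{not} sit inside $M$. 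Since $R(\sigma)$ is nonsingular for $\sigma \neq 0$, feasibility of (\ref{eq43}) then forces $x^T R(\sigma)(M + \rho P) R(\sigma) x < 0$ for every $x \neq 0$.

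For the residual $\zeta$-terms I would apply the Schur complement to (\ref{eq44}) with respect to its $(2,2)$ block $\kappa I_n > 0$, which gives $\rho S - \kappa^{-1} C_{g0}^T C_{g0} > 0$ and hence $\kappa^{-1}\zeta^T C_{g0}^T C_{g0}\zeta - \rho \zeta^T S \zeta < 0$ for every $\zeta \neq 0$. Because the lemma restricts attention to $0 \neq \zeta$ and $0 \neq x$ simultaneously, both quadratic forms are strictly negative, so their sum is strictly negative; multiplying back by $c(\sigma) > 0$ yields (\ref{eq45}).

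I expect the only genuine obstacle to be the matching carried out in the second paragraph: verifying that the compact expressions $P(\cdots)^T Z_d (\cdots) P$ and $(\cdots)^T Z_d^{-1}(\cdots)$ unfold exactly into the scattered $\mu, \beta, \pi, \kappa$ terms of (\ref{eq42}), and in particular confirming that the two $P G_0^T G_0 P$ contributions (from $\pi$ and $\kappa$) both belong to $M$ while the lone $\kappa^{-1} C_{g0}^T C_{g0}$ term is precisely what (\ref{eq44}) is designed to absorb. Everything else is a division by a positive scalar, a single Schur complement, and an additivity-of-strict-negativity argument.
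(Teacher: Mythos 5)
Your proposal is correct and follows essentially the same route as the paper's own (very terse) proof: the paper likewise combines the bound (\ref{eq42}) with the observation that (\ref{eq43}) renders the bracketed matrix less than $-\rho P$ and that the Schur complement of (\ref{eq44}) gives $\kappa^{-1}C_{g0}^TC_{g0}<\rho S$, yielding $\dot V < -c(\sigma)\rho V + c(\sigma)\rho v$. Your explicit unfolding of the $Z_d$ and $Z_d^{-1}$ blocks is exactly the bookkeeping the paper leaves implicit.
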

\begin{proof}
With (\ref{eq43}) and (\ref{eq44}), the inequality (\ref{eq42}) becomes  
\begin{align}
		\label{eq46}  \dot V & < -c(\sigma) \rho V(x) +  c(\sigma) \kappa^{-1} \zeta^TC_{g0}^TC_{g0}\zeta \nonumber \\
  & < -c(\sigma) \rho V(x) +  c(\sigma) \rho v(\zeta) 
\end{align}
which reproduces (\ref{eq45}), proving thus the claim. 
\end{proof}
The last two lemmas are very important to establish the global stability of the whole system under consideration. The interesting feature is that the bounds (\ref{eq31}) and (\ref{eq45}) on the time derivative of the Lyapunov functions candidates can be imposed through matrix inequalities that, as it will be shown afterwards, are converted into LMIs by a suitable one-to-one change of variables. 

The reader certainly noted the fact that the bound on the time derivative (\ref{eq45}) depends explicitly on the state variable $x$ trough the positive function $c(\sigma)$, a feature that needs to be faced with care. To make this point clearer, from (\ref{eq31}) and (\ref{eq45}) it is readily verified that 
\begin{align} 
\label{eq47} \dot v(\zeta) + c(\sigma)^{-1} \dot V(x) < 0
\end{align}
for all vectors $(0,0) \neq (\zeta, x) \in \Rf^{r} \times \Rf^{2n}$. Unfortunately, the state-dependent term $c(\sigma)$ in (\ref{eq47}) prevents the determination of a Lyapunov function associated to the closed-loop system. A natural candidate would be $\nu(\zeta,x)$ such that $\dot \nu(\zeta, x) = \dot v(\zeta) + c(\sigma)^{-1} \dot V(x) < 0$ which results by solving the equations $\nabla_{\zeta} \nu = \nabla_\zeta v$ and $\nabla_{x} \nu = c(\sigma)^{-1} \nabla_x V$. The first is simple but the second one does not appear to exhibit a known solution that yields $\nu(\zeta, x)>0$ as required to be a Lyapunov function associated to the whole feedback system under analysis. On the other hand, the positive function $\nu(\zeta,x)= v(\zeta) + c(\sigma)^{-1} V(x)>0$ does not satisfy the second gradient condition and consequently it is not possible to be sure that $\dot \nu(\zeta,x)<0$. A simple and in some sense unexpected way, based on the results of Lemma \ref{lemma01} and Lemma \ref{lemma02}, to circumvent this difficulty is stated in the next theorem.  
\begin{theorem} \label{theorem02}
If there exist positive definite radially unbounded Lyapunov functions $v(\zeta)$ and $V(x)$ such that 
\begin{align} 
\label{eq48} \dot v & < \rho_L  ( V - v )\\
\label{eq49} \dot V  & < \rho_N ( v - V )
\end{align}
where $\rho_L(\zeta)>0$, $\rho_N(x)>0$, for all $0 \neq \zeta \in \Rf^{r}$ and  all $0 \neq x \in \Rf^{2n}$, along arbitrary trajectories of the subsystems $\mathcal{S}_L$ and $\mathcal{S}_{N}$, respectively. Then, the closed-loop system is globally asymptotically stable. 
\end{theorem}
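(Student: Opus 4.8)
The plan is to build a single Lyapunov certificate for the feedback interconnection by taking the pointwise maximum of the two subsystem functions, $W(\zeta,x) = \max\{v(\zeta),V(x)\}$. This deliberately gives up on the smooth additive candidate $\nu = v + c(\sigma)^{-1}V$ that fails the gradient condition noted just above the theorem, and instead accepts non-differentiability on the switching set $\{v=V\}$, working with the one-sided directional derivative in the sense of \cite{ShePa:1994}.

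First I would check that $W$ is an admissible non-smooth candidate. Because $v$ and $V$ are positive definite, $W(\zeta,x)=0$ holds iff $v(\zeta)=V(x)=0$, i.e. iff $(\zeta,x)=(0,0)$, so $W$ is positive definite; as the maximum of two continuous, radially unbounded functions it is itself continuous and radially unbounded, hence globally proper.

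The heart of the argument is the sign of the derivative of $W$ along Filippov trajectories, analysed on a partition of $\Rf^{r}\times\Rf^{2n}\setminus\{0\}$ into three regions. On the open set $\{v>V\}$ we have $W=v$, and since $v>V\ge 0$ forces $\zeta\neq 0$, hypothesis \eqref{eq48} gives $\dot W=\dot v<\rho_L(V-v)<0$ using $V-v<0$ and $\rho_L>0$. Symmetrically, on $\{V>v\}$ we have $W=V$ and \eqref{eq49} gives $\dot W=\dot V<\rho_N(v-V)<0$. The decisive region is the switching set $\{v=V>0\}$, where $W$ is not differentiable: there the one-sided directional derivative of a maximum equals $\max\{\dot v,\dot V\}$, and since $v=V$ annihilates both coupling terms $V-v$ and $v-V$, the two hypotheses \eqref{eq48}--\eqref{eq49} yield $\dot v<0$ and $\dot V<0$ at once (both $\zeta$ and $x$ are nonzero because $v=V>0$), so $\max\{\dot v,\dot V\}<0$. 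Hence the one-sided directional derivative of $W$ is strictly negative at every $(\zeta,x)\neq(0,0)$.

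Finally I would invoke the non-smooth Lyapunov theorem for differential inclusions \cite{ShePa:1994}: a continuous, positive definite, radially unbounded function whose one-sided directional derivative is strictly negative off the origin certifies global asymptotic stability of the Filippov solutions, which is exactly the claim. I expect the delicate point to be the switching set, and this is precisely where the construction is clever: the additive candidate breaks the gradient condition, whereas the maximum turns the mutual-domination structure of \eqref{eq48}--\eqref{eq49} into simultaneous strict decrease exactly on the non-smooth locus, so no integrating factor or smoothing is required. A minor technical item, absorbed by the Filippov framework, is the singular locus $\sigma=0$ where $c(\sigma)$ is undefined; there the factorisation is read in the limiting sense and the sign conclusions persist.
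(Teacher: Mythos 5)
Your proposal is correct and follows essentially the same route as the paper: the same max-type candidate $\max\{v,V\}$, the same one-sided directional derivative of the maximum (the paper invokes Danskin's theorem), and the same three-region analysis in which the mutual-domination terms vanish on the switching set $\{v=V\}$, forcing both $\dot v<0$ and $\dot V<0$ there. No substantive differences to report.
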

\begin{proof} To prove the claim, let us consider the following Lyapunov function candidate, constructed from the Lyapunov functions (\ref{eq27}) and (\ref{eq28}), that is
\begin{align} 
\label{eq50}  \nu(\zeta, x) = \max\{ v(\zeta), V(x) \} 
\end{align}
which is immediate to verify that it is positive definite and radially unbounded. It is continuous, but it is not differentiable everywhere in $\Rf^{r} \times \Rf^{2n}$. Danskin's theorem, see \cite{Lasdon_book:1970} for details, applies and provides the one-sided directional derivative 
\begin{align} 
\label{eq51}  D_+ \nu = \left \{ \begin{array}{ccc} \dot v &,& v > V \\  \dot V &,& v < V \\  \max\{\dot v, \dot V \} &,& v = V \end{array} \right .   
\end{align}
In the region of the joint state space defined by $v(\zeta) > V(x)$ we have $\nu(\zeta,x) = v(\zeta)$ which together with (\ref{eq31}) and (\ref{eq51}) yields $D_+ \nu(\zeta,x) = \dot v(\zeta) < 0$. Likewise, in the region defined by $v(\zeta) < V(x)$ we have $\nu(\zeta,x) = V(x)$ which together with (\ref{eq45}) and (\ref{eq51}) yield $D_+ \nu(\zeta,x) = \dot V(x) < 0$. Finally in the region defined by $v(\zeta) = V(x)$, both time-derivatives in (\ref{eq31}) and (\ref{eq45}) are negative and consequently (\ref{eq51}) yields $D_+ \nu(\zeta,x)<0$ as well. This shows that the $(\zeta, x) \rightarrow (0,0)$ as $t$ goes to infinity, that is, the trajectories of the system under consideration converge to the origin whatever initial condition. The claim is proven.    
\end{proof}

The time behavior of the interconnected system can be assessed by means of the Lyapunov function $\nu(\zeta,x)$ given in (\ref{eq50}) and its one-sided derivative (\ref{eq51}). With this Lyapunov function, the next Corollary shows that the elapsed time to transit from any initial condition to the origin is not finite. Instead, what is finite is the elapsed time necessary to reach, from any initial condition outside it, a certain residual set $\mathbb{X}$ containing the origin. This is an expected behavior, 
which resembles what occurs in sliding mode control, when a trajectory hits the sliding surface and 
then evolves according to a linear differential equation, naturally exhibiting exponential decay. In fact, by considering a different definition for the variable $z$ it is possible to show that the origin of the nonlinear system $\mathcal{S}_{N}$ with a corresponding modified state $\bar x$ is reached in finite time. This result will be discussed in Section~\ref{second_order_SM}.
\begin{corollary} \label{corollary1}
Assume that the conditions of Lemmas \ref{lemma01} and \ref{lemma02} are satisfied. There exists a nonempty residual set $\mathbb{X} \subset \Rf^{r} \times \Rf^{2n}$ with the following properties i) $(0,0)$ lies in its interior and ii) it is attained in finite time from any initial condition $(\zeta_0, x_0) \notin \mathbb{X}$.  
\end{corollary}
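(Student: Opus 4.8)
The natural tool is the nonsmooth Lyapunov function $\nu(\zeta,x)=\max\{v(\zeta),V(x)\}$ already introduced in \eqref{eq50}, together with its one-sided directional derivative \eqref{eq51}. The plan is to take the residual set to be $\mathbb{X}=\{(\zeta,x)\in\Rf^{r}\times\Rf^{2n}:V(x)\le v(\zeta)+\epsilon\}$ for an arbitrarily fixed $\epsilon>0$, and to show that on its complement the one-sided derivative $D_+\nu$ is bounded above by a strictly negative \emph{constant}. Since $\nu\ge 0$, a linear decrease of $\nu$ at a fixed rate cannot persist indefinitely, which forces the trajectory into $\mathbb{X}$ within a finite time. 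The properties of $\mathbb{X}$ are immediate: it is nonempty, and because the defining inequality is strict at the origin (there $V=v=0<\epsilon$), continuity of $v$ and $V$ places a whole neighborhood of $(0,0)$ inside $\mathbb{X}$, so $(0,0)$ lies in its interior, which settles property (i).

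Next I would work on the complement $\Rf^{r}\times\Rf^{2n}\setminus\mathbb{X}=\{V>v+\epsilon\}$. On this set $V>v$, so by the branch structure of \eqref{eq51} we have $\nu=V$ and $D_+\nu=\dot V$. Invoking the bound \eqref{eq45} of Lemma \ref{lemma02}, together with $c(\sigma)>\alpha$ (read off from \eqref{eq22}) and the standing inequality $V-v>\epsilon$, I would estimate $D_+\nu=\dot V<-\rho\,c(\sigma)(V-v)<-\rho\alpha\epsilon$, a negative constant independent of the state.

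Finally, I would integrate this Dini inequality along the (Filippov) trajectory. Since $\nu$ is locally Lipschitz, it is absolutely continuous along solutions and its derivative is bounded above by $D_+\nu$ wherever it exists; hence $\nu(\zeta(t),x(t))\le\nu(\zeta_0,x_0)-\rho\alpha\epsilon\,t$ for as long as the trajectory remains in the complement of $\mathbb{X}$. Because $\nu\ge 0$, this is impossible once $t>\nu(\zeta_0,x_0)/(\rho\alpha\epsilon)$, so starting from any $(\zeta_0,x_0)\notin\mathbb{X}$ the set $\mathbb{X}$ is attained within that finite time, which establishes property (ii).

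The delicate point, and the main obstacle, is the correct choice of $\mathbb{X}$. A plain sublevel set $\{\nu\le c_0\}$ cannot be reached in finite time by this argument: on the diagonal $\{v=V\}$ the bounds \eqref{eq31} and \eqref{eq45} only yield $D_+\nu<0$ with no uniform positive lower bound on $-D_+\nu$, since both right-hand sides vanish there. Introducing the offset $\epsilon$ is precisely what keeps the complement of $\mathbb{X}$ uniformly away from this diagonal and thereby restores the constant decay rate $\rho\alpha\epsilon$. Observe also that letting $\epsilon\to 0$ makes this rate vanish and the reaching-time bound blow up, in agreement with the preceding remark that the origin itself is reached only asymptotically, not in finite time.
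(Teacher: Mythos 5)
Your argument is internally sound, but it takes a genuinely different route from the paper and constructs a qualitatively weaker object. The paper does not work with the raw bounds (\ref{eq31}) and (\ref{eq45}); it exploits the fact that the matrix inequalities of Lemmas \ref{lemma01} and \ref{lemma02} are \emph{strict}, so a feasibility margin can be extracted, yielding $\dot v \leq -\varepsilon_L v + \rho_L(V-v)$ and $\dot V \leq -\varepsilon_N \sqrt{V} + \rho_N(v-V)$ for some $\varepsilon_L, \varepsilon_N>0$ (the $\sqrt{V}$ term is the usual STA mechanism, coming from $c(\sigma)\geq 1/\sqrt{\|\sigma\|}$ and $\|\sigma\|\lesssim V$). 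On the max-function this gives $D_+\nu \leq -\varepsilon_L\nu$ below the crossover value $\nu_* = (\varepsilon_N/\varepsilon_L)^2$ and $D_+\nu \leq -\varepsilon_N\sqrt{\nu}$ above it, so the \emph{compact sublevel set} $\mathbb{X}=\{\nu\leq\nu_*\}$ is reached in time at most $(2/\varepsilon_N)(\sqrt{\nu_0}-\sqrt{\nu_*})$. The ``main obstacle'' you identify --- that (\ref{eq31}) and (\ref{eq45}) give no uniform decay rate on the diagonal $\{v=V\}$ --- is real, but the paper resolves it by strengthening the differential inequalities rather than by moving the target set off the diagonal.

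The cost of your workaround is that $\mathbb{X}=\{V\leq v+\epsilon\}$ is unbounded: it contains, e.g., every point with $x=0$ and $\zeta$ arbitrary, so reaching it says nothing about the trajectory being near the origin. Your set does satisfy the two properties literally stated in the corollary (indeed, as written the corollary is satisfied even by $\mathbb{X}=\Rf^r\times\Rf^{2n}$, so the literal statement is a low bar), but it does not deliver what the corollary is meant to deliver and what the subsequent discussion relies on --- a level set of $\nu$ that can be made \emph{small} and that characterizes an ultimate bound; nor is your set obviously forward invariant, since on its boundary you control $\dot V$ but not $\dot V-\dot v$. If you want a compact residual set, you need the extra margin terms; your construction alone cannot produce one, for exactly the reason you state in your last paragraph.
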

\begin{proof} Since, by assumption, the matrix inequalities of  Lemmas \ref{lemma01} and \ref{lemma02} are feasible, there exist scalars $\varepsilon_L>0$ and $\varepsilon_N>0$, see \cite{gero:2023} for more details, such that the Lyapunov functions $v(\zeta)$ and $V(x)$ satisfy 
\begin{align} 
\label{eq52} \dot v & \leq -\varepsilon_L v +  \rho_L  ( V - v )\\
\label{eq53} \dot V  & \leq  - \varepsilon_N \sqrt{V} + \rho_N ( v - V )
\end{align}
and, consequently, the one-sided derivative (\ref{eq51}) becomes 
\begin{align} 
\label{eq54}  D_+ \nu \leq \left \{ \begin{array}{ccc} -\varepsilon_L \nu &,& \nu=v > V \\  - \varepsilon_N \sqrt{\nu}&,& v < V = \nu \\  \max\{-\varepsilon_L \nu, - \varepsilon_N \sqrt{\nu} \} &,& \nu = v = V \end{array} \right .   
\end{align}
Defining $\nu=\nu_* = (\varepsilon_N/\varepsilon_L)^2>0$, the value for which both functions $-\varepsilon_L \nu$ and $- \varepsilon_N \sqrt{\nu}$ are equal, taking into account the intervals where one is greater than the other, (\ref{eq54}) simplifies to
\begin{align} 
\label{eq55}  D_+ \nu \leq \left \{ \begin{array}{ccc} -\varepsilon_L \nu &,& 0 \leq \nu < \nu_* \\  - \varepsilon_N \sqrt{\nu}&,& \nu_* \leq  \nu \end{array} \right .   
\end{align}
Now, defining the residual set $\mathbb{X}$ as the level set of the function $\nu$ corresponding to $\nu = \nu_*$, that is 
\begin{align} 
\label{eq56}  \mathbb{X} = \{ (\zeta, x) : \nu(\zeta,x) \leq \nu_* \} 
\end{align}
then, by definition, one may conclude that it exhibits the properties i) and ii). Moreover, taking an arbitrary initial condition $(\zeta_0, x_0) \notin \mathbb{X}$, the integration of (\ref{eq55}) from $t=0$ to $t>0$ proves that this set is attained in a finite time satisfying $T(\nu_0) \leq (2/\varepsilon_N) ( \sqrt{\nu_0} - \sqrt{\nu_*})$. The proof is concluded. 
\end{proof}

This result is useful in many instances of interest. In principle, the Lyapunov function $\nu(\zeta,x)$ given in (\ref{eq50}) can be designed such that the level set $\mathbb{X}$ defined in (\ref{eq56}) is {\em small} with respect to some metric. Another possibility is to determine the Lyapunov function such that a performance index of interest has a guaranteed minimum upper bound facing parameters and exogenous uncertainties. This last proposition, more realistic in our opinion, is addressed in the next section.

\section{Robust performance analysis}

The guaranteed performance of the closed-loop system can be defined as follows: For a given scalar $\gamma>0$ and the state feedback gains $K_0 \in \Rf^{m \times r}$, $K \in \Rf^{m \times 2n}$, determine the smaller guaranteed performance index $\varrho_{rob}$ such that 
\begin{align} 
\label{eq57}  \sup_{\|w\|^2 \leq c(\sigma)^{-2} \gamma^{-2}} \int_0^\infty \xi^T \xi dt \leq \varrho_{rob}^2
\end{align}
where, as discussed before, the controlled output of interest is expressed as $\xi = (H+ JK_0 ) \zeta$. This quadratic performance index penalizes the state trajectory deviations from zero of the linear subsystem $\mathcal{S}_L$, imposing thus good transient response. To this end, let us replace the inequality (\ref{eq30}) by
\begin{align} 
\label{eq58} \left [ \begin{array}{ccc}  A^TS + S A + \rho S & \bullet & \bullet \\ \alpha^{-1} G_0^TE^TS & - \rho P & \bullet \\ H + JK_0 & 0 & - I_q \end{array} \right ] < 0
\end{align}
and inequality (\ref{eq44}) by 
\begin{align}
		\label{eq59}  \left [ \begin{array}{ccc} \rho S &  \bullet & \bullet \\ C_{0} & \kappa I_n & \bullet \\ H + JK_0 & 0 & \alpha I_q \end{array} \right ] > 0
\end{align}
which allow us to state and prove the next theorem that provides the minimum guaranteed performance index $\varrho_{rob}$ satisfying the upper bound (\ref{eq57}). 
\begin{theorem} \label{theorem04}
Let the scalar $\gamma>0$, the state feedback gains $K_0$, $K$ and a non null initial condition $(\zeta_0, x_0)$ be given. Assuming feasibility, the minimum robust guaranteed performance index equals to 
\begin{align}
		\label{eq60}  \varrho_{rob}^2 = \inf_{\alpha>0, \rho>0, Z_d>0, S, P} \{ \nu(\zeta_0, x_0) : (\ref{eq43}), (\ref{eq58})-(\ref{eq59}) \} 
\end{align}
\end{theorem}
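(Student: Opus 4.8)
The plan is to augment the two differential inequalities of Lemmas~\ref{lemma01} and~\ref{lemma02} with the quadratic penalty $\xi^T\xi=\zeta^T(H+JK_0)^T(H+JK_0)\zeta$ and then to recombine them through the same max-type Lyapunov function $\nu(\zeta,x)=\max\{v(\zeta),V(x)\}$ used in Theorem~\ref{theorem02}. First I would revisit inequality~(\ref{eq58}): taking its Schur complement with respect to the third block $-I_q$ reproduces exactly the $2\times 2$ matrix of~(\ref{eq30}) with the extra term $(H+JK_0)^T(H+JK_0)$ added to its $(1,1)$ entry. Running verbatim the argument of Lemma~\ref{lemma01} --- replacing $\alpha^{-1}$ by $c(\sigma)^{-1}$ through $c(\sigma)^{-2}<\alpha^{-2}$ and then applying the congruence by $[\zeta^T~x^TR(\sigma)]$ --- therefore upgrades~(\ref{eq31}) to
\[
\dot v < \rho(V-v)-\xi^T\xi .
\]

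Next I would treat~(\ref{eq59}) by two successive Schur complements, first with respect to its block $\alpha I_q$ and then with respect to $\kappa I_n$; this shows that~(\ref{eq59}) is equivalent to $\kappa^{-1}C_{0}^TC_{0}<\rho S-\alpha^{-1}(H+JK_0)^T(H+JK_0)$. Inserting this sharper estimate into the final step~(\ref{eq46}) of the proof of Lemma~\ref{lemma02}, where the residual output term was previously bounded by $c(\sigma)\rho v$, upgrades~(\ref{eq45}) to
\[
\dot V < c(\sigma)\rho(v-V)-c(\sigma)\alpha^{-1}\xi^T\xi ,
\]
and this estimate holds uniformly over every disturbance obeying $\|w\|^2\le c(\sigma)^{-2}\gamma^{-2}$, precisely because that constraint was already absorbed through~(\ref{eq38}) into the matrix inequality~(\ref{eq43}).

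I would then reassemble the two bounds exactly as in Theorem~\ref{theorem02}, but now carrying the penalty. On the region $v>V$ the one-sided derivative~(\ref{eq51}) equals $\dot v$, and the first estimate gives $D_+\nu<-\xi^T\xi$ since $V-v<0$. On the region $v<V$ it equals $\dot V$; here I would use the elementary fact that $c(\sigma)=\|\sigma\|^{-1/2}+\alpha>\alpha$, so that $c(\sigma)\alpha^{-1}>1$ and the second estimate gives $D_+\nu<-c(\sigma)\alpha^{-1}\xi^T\xi<-\xi^T\xi$, again because $v-V<0$. On the seam $v=V$ both branches are bounded by $-\xi^T\xi$, hence so is their maximum. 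Consequently $D_+\nu\le-\xi^T\xi$ on the whole joint state space.

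The concluding step, and the one I expect to be the delicate point, is to integrate this inequality along Filippov solutions: since $\nu$ is only locally Lipschitz, one must invoke the comparison between the Dini derivative of $t\mapsto\nu(\zeta(t),x(t))$ and the directional derivative~(\ref{eq51}), which is exactly what Danskin's theorem licenses in Theorem~\ref{theorem02}. Integrating from $0$ to $T$ and using only $\nu\ge 0$ gives $\int_0^T\xi^T\xi\,dt\le\nu(\zeta_0,x_0)-\nu(\zeta(T),x(T))\le\nu(\zeta_0,x_0)$; letting $T\to\infty$ and noting that the right-hand side does not depend on the particular admissible $w$ yields $\sup_{\|w\|^2\le c(\sigma)^{-2}\gamma^{-2}}\int_0^\infty\xi^T\xi\,dt\le\nu(\zeta_0,x_0)$. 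Thus every feasible tuple $(\alpha,\rho,Z_d,S,P)$ certifies $\nu(\zeta_0,x_0)$ as a guaranteed value of~(\ref{eq57}); since the minimum guaranteed index is by definition the tightest such certifiable bound, minimizing $\nu(\zeta_0,x_0)$ over the feasible set cut out by~(\ref{eq43}) and~(\ref{eq58})--(\ref{eq59}) returns precisely $\varrho_{rob}^2$, establishing~(\ref{eq60}).
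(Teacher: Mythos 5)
Your proposal is correct and follows essentially the same route as the paper: Schur complements of (\ref{eq58}) and (\ref{eq59}) upgrade the two lemma bounds to $\dot v<\rho(V-v)-\xi^T\xi$ and $\dot V<\rho c(\sigma)(v-V)-c(\sigma)\alpha^{-1}\xi^T\xi$, the inequality $c(\sigma)\alpha^{-1}\geq 1$ yields $D_+\nu\leq-\xi^T\xi$ for the max-type function, and integration together with the global asymptotic stability from Theorem~\ref{theorem02} gives the bound $\nu(\zeta_0,x_0)$ whose infimum over feasible certificates is $\varrho_{rob}^2$. Your explicit remark on passing from the Dini derivative of $t\mapsto\nu(\zeta(t),x(t))$ to the one-sided directional derivative before integrating is a point the paper leaves implicit, but it does not change the argument.
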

\begin{proof} The Schur Complement with respect to the third row and column of inequality (\ref{eq58}), the result of Lemma \ref{lemma01}, expressed by (\ref{eq31}), becomes 
\begin{align}
		\label{eq61}   \dot v < \rho(V - v) - \xi^T \xi
\end{align}
Likewise, assuming that (\ref{eq43}) holds, performing the Schur Complement with respect to the third row and column of inequality (\ref{eq59}), the result of Lemma \ref{lemma02}, expressed by (\ref{eq45}), becomes 
\begin{align}
		\label{eq62}   \dot V & < \rho c(\sigma)(v - V) - c(\sigma) \alpha^{-1} \xi^T \xi \nonumber \\
  & \leq \rho c(\sigma)(v - V) - \xi^T \xi
\end{align}
where, the second inequality follows from the fact that (\ref{eq22}) implies $c(\sigma) \alpha^{-1} \geq 1$ for all $\sigma \neq 0$ and all $\alpha >0$. Hence, these bounds on the time derivative of the Lyapunov functions $v$ and $V$, provide the one-sided directional derivative, see (\ref{eq51}), satisfying $D_+ \nu \leq - \xi^T \xi$, for all $(\zeta, x)$. As a consequence, since by assumption, problem (\ref{eq60}) is feasible, by Theorem \ref{theorem02} the closed-loop system is globally asymptotically stable, in which case the time integration from $0$ to $+\infty$ yields 
\begin{align} 
\label{e63}  \sup_{\|w\|^2 \leq c(\sigma)^{-2} \gamma^{-2}} \int_0^\infty \xi^T \xi dt \leq \nu(\zeta_0, x_0)
\end{align}
making clear that $\varrho_{rob}^2$ given in (\ref{eq57}) is the smallest upper bound. The proof is concluded. 
\end{proof}

As usual, a simple modification of the stability conditions suffices to obtain the best guaranteed performance index that can be granted by the class of Lyapunov function candidates. Due to the complexity of the control design problem under consideration, it is a bit surprising that problem (\ref{eq60}) can be shown to be expressed by LMIs with respect to all variables (including the state feedback gains) but the pair of positive scalar variables $(\alpha, \rho)$. This is the subject of the next section. 

\section{LMI-based conditions}

In this section, our goals are twofold. First, we show that problem (\ref{eq60})  can be converted into a jointly convex programming problem even if the state feedback gains are included in the set of variables provided that that pair of positive variables $(\alpha, \rho)$ is fixed. This is accomplished by means of a suitable one-to-one change of variables. Second, a numeric procedure based on a $2$-dimensional search is developed in order to determine the control design variables associated to the minimum guaranteed performance index $\varrho_{rob}$. At this stage, parametric convex bounded uncertainties defined by the convex polyhedral set $\mathbb{M}$ are also included.

Consider the one-to-one change of variables, of compatible dimensions, defined by
\begin{align} 
\label{eq64}  (S^{-1}, P^{-1}, KP^{-1}, K_0S^{-1}) ~\Longleftrightarrow~ (Q, X, Y, W)
\end{align}
Multiplying both sides of inequality (\ref{eq43}) by $P^{-1}>0$ and using the change of variables (\ref{eq64}), the LMIs
\begin{align}
\label{eq65} \left [ \begin{array}{cc} \left ( \begin{array}{c} A_0 X + B_{0i} Y \\ + X A_0^T + Y^T B_{0i}^T + \rho X  \\ + \left [ \begin{array}{c} E_0^T \\ F_0^T \\ G_0 \\ G_0 \end{array} \right ]^T Z_d \left [ \begin{array}{c} E_0^T \\ F_0^T \\ G_0 \\ G_0 \end{array} \right ] \end{array} \right )  & \bullet   \\ \left (\begin{array}{c} B_iY \\ \gamma^{-1}G_0X \\ \alpha^{-1} D_{i} G_0 X \\ 0 \end{array} \right ) & -Z_d  \end{array} \!\! \right ] <0
\end{align}
for $i \in \Kn$, imply that (\ref{eq43}) holds for all $B \in \mathbb{B}$. Indeed, multiplying (\ref{eq65}) successively by the components of $\lambda \in \Lambda$ and summing up the partial results, the inequality (\ref{eq43}) follows. Multiplying the inequality (\ref{eq58}) both sides by ${\rm diag}(S^{-1}, P^{-1}, I_q)$, the same reasoning provides the LMIs 
\begin{align}
\label{eq66} \left [ \begin{array}{ccc} \left ( \begin{array}{c} A_{i} Q  \\ + Q A_{i}^T  + \rho Q  \end{array} \right )  & \bullet & \bullet \\ \alpha^{-1} X G_0^TE_{i}^T & - \rho X & \bullet \\ HQ + JW & 0 & - I_q \end{array} \!\! \right ] <0
\end{align}
for $i \in \Kn$. Similarly, multiplying the inequality (\ref{eq59}) both sides by ${\rm diag}(S^{-1}, I_n, I_q)$, once again, similar arguments lead to the LMIs 
\begin{align}
\label{eq67} \left [ \begin{array}{ccc} \rho Q   & \bullet & \bullet \\ C_{i}Q + B_{i}W & \kappa I_n & \bullet \\ HQ + JW  & 0 &  \alpha I_q \end{array} \!\! \right ] > 0 
\end{align}
for $i \in \Kn$. As it can be easily verified, the inequalities (\ref{eq65})-(\ref{eq67}) are LMIs with respect to the matrix variables $Q$, $X$, $Y$, $W$ and $Z_d$ provided that the pair of positive variables $(\alpha, \rho)$ is fixed. Finally, keeping in mind that the maximum of two elements equals the minimum upper bound, the LMIs
\begin{align}
		\label{eq68}  \left [ \begin{array}{cc} \theta & \bullet \\ \zeta_0 & Q \end{array} \right ]>0,~\left [ \begin{array}{cc} \theta & \bullet \\ R(\sigma_0) x_0 & X \end{array} \right ]>0
\end{align}
depending on the scalar variable $\theta>0$, enable us to determine the Lyapunov function evaluated at the initial condition $(\zeta_0, x_0)$ where $x_0 = [\sigma_0^T~ \eta_0^T]^T$ whenever the exogenous disturbance satisfies $f(0)=0$, see (\ref{eq02}). The case of $ f(0) \neq 0$ can be addressed following the same steps presented in \cite{NGH:2024}. Moreover, note that $R(\sigma_0)$ relies on the positive scalar $\alpha >0$. These finds are summarized in the next theorem. 
\begin{theorem} \label{theorem05}
Let the scalars $\gamma>0$, $\alpha>0$, $\rho>0$ and a non null initial condition $(\zeta_0, x_0)$ be given. The optimal solution of the convex programming problem 
\begin{align}
		\label{eq69}  \inf_{\theta, Q, X, Y, W, Z_d} \{ \theta : (\ref{eq65})-(\ref{eq68}) \} 
\end{align}
provides the state feedback matrix gains $K_0 = WQ^{-1}$ and $K = Y X^{-1}$ such that the closed-loop system is globally asymptotically stable with guaranteed performance index $\theta$, whenever the exogenous disturbance satisfying (\ref{eq02}) and the parametric uncertainty defined by the polyhedral set $\mathbb{M}$ is concerned. 
\end{theorem}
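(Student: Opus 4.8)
The plan is to show that each of the vertex LMIs \eqref{eq65}--\eqref{eq67} is nothing but a congruence/Schur-complement reformulation of the matrix inequalities \eqref{eq43}, \eqref{eq58} and \eqref{eq59} of Theorem~\ref{theorem04}, evaluated at the vertices of $\mathbb{M}$, so that feasibility of the convex program \eqref{eq69} both recovers admissible gains and lets me invoke Theorem~\ref{theorem04}. First I would note that the change of variables \eqref{eq64} is one-to-one: the constraints force $Q>0$ and $X>0$ (they appear as the lower-right diagonal blocks of the positive definite matrices in \eqref{eq68}), hence $S=Q^{-1}$ and $P=X^{-1}$ are well defined and the gains are uniquely recovered as $K=YX^{-1}$ and $K_0=WQ^{-1}$.

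Next I would treat the nonlinear-subsystem condition. Congruence-transforming \eqref{eq43} on both sides by $X=P^{-1}$ turns $A_K^TP+PA_K+\rho P$ into $A_0X+B_0Y+XA_0^T+Y^TB_0^T+\rho X$ via $A_KX=A_0X+B_0KX=A_0X+B_0Y$, collapses the term $P[\cdots]^TZ_d[\cdots]P$ into the constant $[\cdots]^TZ_d[\cdots]$, and leaves $X[\cdots]^TZ_d^{-1}[\cdots]X$ whose stacked columns become $[BY;\,\gamma^{-1}G_0X;\,\alpha^{-1}D_gG_0X;\,0]$. Applying the Schur complement with respect to the $-Z_d$ block then reproduces \eqref{eq65} at a generic $(B,D)\in\mathbb{M}$. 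Since this transformed inequality is affine in $(B,D)$ for fixed decision variables, writing it at the vertices $(B_i,D_i)$ and forming the convex combination $\sum_{i}\lambda_i(\cdot)$ for $\lambda\in\Lambda$ shows that \eqref{eq65} for all $i\in\Kn$ implies \eqref{eq43} for every element of $\mathbb{M}$.

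The two linear-subsystem inequalities are handled analogously. Congruence by ${\rm diag}(Q,X,I_q)={\rm diag}(S^{-1},P^{-1},I_q)$ converts \eqref{eq58} blockwise into $AQ+QA^T+\rho Q$, $\alpha^{-1}XG_0^TE^T$, $-\rho X$ and $HQ+JW$ (using $W=K_0Q$), that is \eqref{eq66}; congruence by ${\rm diag}(Q,I_n,I_q)$ converts \eqref{eq59} into \eqref{eq67}, with off-diagonal entry $(C+BK_0)Q=CQ+BW$. Because \eqref{eq66} and \eqref{eq67} are affine in $(A,E)$ and $(C,B)$ respectively, the same vertex-plus-convex-combination argument lifts them to all of $\mathbb{M}$. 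At this stage \eqref{eq43}, \eqref{eq58} and \eqref{eq59} hold throughout $\mathbb{M}$, so Theorem~\ref{theorem04} (through Theorem~\ref{theorem02}) yields robust global asymptotic stability together with the bound $\sup\int_0^\infty\xi^T\xi\,dt\le\nu(\zeta_0,x_0)$.

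Finally I would close the loop on the cost. The two inequalities in \eqref{eq68} are the Schur complements of $\theta\ge\zeta_0^TS\zeta_0=v(\zeta_0)$ and $\theta\ge(R(\sigma_0)x_0)^TP(R(\sigma_0)x_0)=V(x_0)$, so any feasible $\theta$ upper-bounds both, whence $\theta\ge\max\{v(\zeta_0),V(x_0)\}=\nu(\zeta_0,x_0)$; since the maximum is the least common upper bound, minimizing $\theta$ drives it down to $\nu(\zeta_0,x_0)$, confirming $\theta$ as a guaranteed performance index (and, after the two-dimensional search over $(\alpha,\rho)$, the minimum $\varrho_{rob}^2$ of Theorem~\ref{theorem04}). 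The main obstacle I anticipate is the bookkeeping of the nonlinear step: correctly tracking that the congruence by $X=P^{-1}$ simultaneously linearizes the $PZ_dP$ term and exposes the $Z_d^{-1}$ term as a Schur complement against $-Z_d$, while the block structure $B_0=[B;\,0]$ with $Y=KX$ produces exactly the entries $B_iY$ of \eqref{eq65}. The remaining steps, including verifying the affinity that validates the polytopic argument and noting that $R(\sigma_0)$ legitimately depends on the fixed scalar $\alpha$, are routine.
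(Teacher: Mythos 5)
Your proposal is correct and follows essentially the same route as the paper: the congruence transformations by $P^{-1}$, ${\rm diag}(S^{-1},P^{-1},I_q)$ and ${\rm diag}(S^{-1},I_n,I_q)$, the Schur complement against $-Z_d$, the vertex-plus-convex-combination argument to cover $\mathbb{M}$, and the identification of $\theta$ with $\max\{v(\zeta_0),V(x_0)\}=\nu(\zeta_0,x_0)$ via the Schur complements of \eqref{eq68} are exactly the steps the paper carries out (mostly in the text preceding the theorem, with the proof itself reduced to invoking Theorem~\ref{theorem02} and equation \eqref{eq70}). Your version simply makes explicit the bookkeeping that the paper leaves in the surrounding discussion.
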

\begin{proof} The proof is a consequence of  Theorem~\ref{theorem02}. Whenever (\ref{eq69}) is feasible the corresponding state feedback gain ensure global robust asymptotic stability of the closed-loop system. Moreover, since $\alpha>0$ and $\rho>0$ are fixed (but not necessarily optimal) the value of $\theta$ is an upper bound to the guaranteed performance index, because 
\begin{align}
	\label{eq70}  \nu(\zeta_0, x_0) & = \max  \{ v(\zeta_0), V(x_0) \} \nonumber \\
  & = \inf_{\theta} \{ \theta : (\ref{eq68}) \} 
\end{align}
concluding thus the proof. 
\end{proof}

From Theorem \ref{theorem05}, there is no difficulty to calculate the optimal state feedback gains imposing the optimal guaranteed performance to the closed-loop system, whenever subject to exogenous unknown perturbations satisfying (\ref{eq02}) and parameter convex uncertainty described by the set $\mathbb{M}$. To this end, it suffices to solve
\begin{align}
\label{eq71}  \varrho_{rob}^2 = \inf_{\alpha>0, \rho>0} \Big \{ \inf_{\theta, Q, X, Y, W, Z_d} \{ \theta : (\ref{eq65})-(\ref{eq68}) \} \Big \}
\end{align}
where it must be observed that the inner problem is convex and so can be solved for each fixed $(\alpha, \rho)$ with no difficulty. The outer minimization is performed by a suitable $2$-dimensional search procedure. In this way, the joint global optimum is reached. 

Last but not least is the fact that the gain of the MSTA is not directly constrained and so its magnitude may be expressive being, in may cases, inappropriate. A way to circumvent this occurrence is to include in the problem (\ref{eq71}) the convex constraint 
\begin{align}
		\label{eq72}  \left [ \begin{array}{cc} \omega I_m & \bullet \\ Y^T & X \end{array} \right ]>0
\end{align}
where the scalar $\omega>0$ is given. Actually, recalling that $u_{ST}(t) = K R(\sigma(t))x(t)$ is the MSTA control signal, (\ref{eq72}) yields $K^TK < \omega P$ and consequently 
\begin{align}
		\label{eq73}  \|u_{ST}(t)\|^2 & = x(t)^T R(\sigma(t))K^TKR(\sigma(t))x(t) \nonumber \\
        & < \omega V(x(t))  \nonumber \\
        &  \leq \omega \nu(\zeta(t),x(t)) \nonumber \\
        & \leq \omega \theta,~\forall t \in \Rf_+
\end{align}
indicates that the MSTA control signal magnitude, for all $t \in \Rf_+$, can the imposed by the appropriate choice of the scalar $\omega>0$.  

\section{
\label{second_order_SM}
The realization of second-order sliding modes
}

To conclude the theoretical analysis, in this section, we show that a second-order sliding mode is achieved after some finite time. To accomplish this, let us redefine the variable $z(t)$ in a more convenient manner for this purpose as
\begin{align} \label{eq_newz}
\bar z = \eta +  (B K_2)^{-1} \bigg( f +  C_{0} \zeta \bigg )
\end{align}
where, from (\ref{eq23}), $\bar z = z +  (B K_2)^{-1} C_{0} \zeta $ and since we have already proven that $\zeta$ and $z$ converge to zero, it is straightforward to see that $\bar z$ also converges to zero. 

Once again, we first consider that the system matrices are given. Subsequently, the extension to include polytopic uncertainty can be easily achieved due to the linear nature of the problem concerning the uncertainty. 
Taking \eqref{eq_newz} into account, and redefining the state variable  $\bar x =[\sigma^T~\bar z^T]^T \in \Rf^{2n}$, the nonlinear subsystem $\mathcal{S}_{N}$, can be described by
\begin{align}
    \label{eq_barx1}  \dot{\bar{x}} & =  L(\sigma) A_K R(\sigma) \bar x + L(\sigma) F_0 \bar w + G_0^T D G_0 \bar x  \\
\label{eq_barx2}  \sigma & = G_0 \bar x
\end{align}
where the modified disturbance is now given by
\begin{align}
    \label{eq_barx3}
\bar w(t) & = c(\sigma)^{-1} (BK_2)^{-1}  \dot{\bar{f}}(t) 
\end{align}
with $ \bar f(t) = f(t) + C_{0} \zeta$. Hence, considering that 
\begin{align} \label{eq_barx4}
\| \dot{\bar{f}}(t) \| & \leq  \| \dot{f}(t) \| + \|C_{0} \dot \zeta\| 
\end{align}
two important aspects regarding (\ref{eq_barx1})-(\ref{eq_barx2}) deserve some comments. First, the global asymptotic stability result previously reported, implies that $\dot{\zeta}(t)$ is a bounded signal with an upper bound that decreases over time. Second, as the dynamic equation of $\bar x$ does not depend directly on $\zeta$ (which only affects the bounded disturbance $\bar w$) the analysis can proceed without considering the linear subsystem. Actually, given $\gamma>0$ and $\delta = \|(BK_2)^{-1}\|^{-1} \gamma^{-1}$ there exists a finite time $t_s>0$ such that
\begin{align}
    \label{eq_dot_barf_bound}
    \| \dot{\bar{f}}(t) \| \leq \delta,~\forall t \geq t_s
\end{align}
and consequently $\bar w^T \bar w \leq  c(\sigma)^{-2} \gamma^{-2}, \forall t \geq t_s$, whenever the exogenous disturbance satisfies $\|\dot f(t) \| < \delta$. Adopting the same Lyapunov function as before, but now associated to the system (\ref{eq_barx1})-(\ref{eq_barx2}),   
\begin{align}
\label{eq_barV}  V( \bar x) = \bar x^TR(\sigma) P R(\sigma) \bar x
\end{align}
it can be verified that inequality \eqref{eq43} is satisfied for all $t \geq t_s$, which in turn implies that
\begin{align} \label{eq_dbarV}
\dot V( \bar x) \leq - c(\sigma) \| R(\sigma) \bar x \|^2,~\forall t \geq t_s
\end{align}
Taking into account that $c(\sigma) \geq 1/\sqrt{ \|\sigma \|}$ for all $\sigma \neq 0$ and $\| R(\sigma) \bar x\| \geq \sqrt{\| \sigma \|}$, there exist $\theta_N>0$ and  $\varepsilon_N>0$ such that 
\begin{align} \label{eq_dbarV2}
\dot V( \bar x) \leq  - \frac{\varepsilon_N}{\sqrt{\theta_N}} V(\bar x)^{1/2}, \forall t \geq t_s
\end{align}
allowing, from usual arguments, see \cite{gero:2023}, that the equilibrium point $\bar x = 0$ is globally reached after some finite elapsed time. 
The next section is entirely devoted to the application of the theoretical results presented so far in a practical motivated control problem.

\section{Control of a chain of trailers}
\label{trailer}

Consider the fault-tolerant control of a chain of three trailers, inspired by \cite{HCC:2003}, \cite{CCH:2003} and illustrated in \figurename~\ref{fig:chainOfTrailers}. The system is controlled by actuators on the two active trailers. The output vector $q_a = [ q_{a1}~ q_{a2} ]^T$ denotes the positions of the active trailers, $q_p$ denotes the position of the passive trailer, and the control variable $u_a \in \mathbb{R}^3$ represents the actuator forces. A redundant actuator allows the system to tolerate the complete loss of one of them. The resultant actuator forces $F_a \in \mathbb{R}^2$ is described by
\begin{align*}
F_a = S_a \mathcal{F} u_a,~S_a = \left[ \begin{array}{rrr}
    1 & -1 & 0 \\ 0 &  1 & -1 \end{array} \right]
\end{align*}
where $\mathcal{F} = \text{diag}(\mathcal{F}_1,~\mathcal{F}_2,~\mathcal{F}_3)$ is a diagonal matrix indicating the fault status of each actuator. The fault index $\mathcal{F}_i$ is characterized as follows: $\mathcal{F}_i = 1$ the 
$i$-th actuator operates normally, $ 0 < \mathcal{F}_i < 1 $ the $i$-th actuator exhibits degraded performance, and $\mathcal{F}_i = 0$ the $i$-th actuator is completely non-functional.

Here, as in \cite{HCC:2003} and \cite{CCH:2003}, a passive fault-tolerant control approach is adopted. The MGSTA combined with an appropriate constant mixer matrix $S_a^T$ ensures that the closed-loop system maintains stability and desired performance, even in the presence of certain actuator faults. After applying the mixer $u_a = S_a^T u$, and considering $\zeta_t = [q_a^T~q_{p}~\dot{q}_{p}]^T \in \mathbb{R}^4$ and $\dot{q}_a  \in \mathbb{R}^2$, the motion dynamics of the trailers can be described by
\begin{align*}
    \dot \zeta_t & = \mathcal{A}_{11} \zeta_t + \mathcal{A}_{12}  \dot{q}_a \\
    \ddot{q}_a & = \mathcal{A}_{21} \zeta_t + \mathcal{A}_{22}  \dot{q}_a + B u + f_d(t) 
\end{align*}
where
\begin{align*}
\mathcal{A}_{11} & = \left[ \begin{array}{cccc} 0 & 0 & 0 & 0 \\ 0 & 0 & 0 & 0 \\ 0 & 0 & 0 & 1 \\ \frac{k_{13}}{m_3} & \frac{k_{32}}{m_3} & -\frac{k_{13}+k_{32}}{m_3} & -\frac{b_{13}+b_{32}}{m_3}  \end{array} \right], ~
\mathcal{A}_{12} = \left[ \!\! \begin{array}{cc} 1 & 0 \\ 0 & 1 \\ 0 & 0 \\ \frac{b_{13}}{m_3} &  
\frac{b_{32}}{m_3} \end{array} \!\! \right] 
\end{align*}
\begin{align*}
\mathcal{A}_{21} & \!=\! \left[ \!\! \begin{array}{cccc}
         -\frac{k_{13}}{m_1} & 0 & \frac{k_{13}}{m_1} & \frac{b_{13}}{m_1}  \\ 0 & -\frac{k_{32}}{m_2} & \frac{k_{32}}{m_2} &  \frac{b_{32}}{m_2} \end{array} \!\! \right],~
\mathcal{A}_{22} \!=\! \left[ \begin{array}{cc}
         -\frac{b_{13}}{m_1} & 0 \\
         0 & -\frac{b_{32}}{m_2} \end{array} \right]
\end{align*}
and
\begin{align*}
B &= \left[ \begin{array}{cc}
         m_1^{-1} ( \mathcal{F}_1 + \mathcal{F}_2 )  & -m_1^{-1} \mathcal{F}_2  \\
         -m_2^{-1} \mathcal{F}_2 & m_2^{-1}  ( \mathcal{F}_2 + \mathcal{F}_3 ) \end{array} \right] 
\end{align*}
The parameters are given by $m_1 = 1.0 \unit{kg}$, $2.0 \unit{kg} \leq m_2 \leq 3.0 \unit{kg}$, $2.0 \unit{kg} \leq m_3 \leq 3.0 \unit{kg}$, $k_{13} = \unit[30.0]{N/m}$, $k_{32} = \unit[45.0]{N/m}$,
$b_{13} = \unit[15.0]{Ns/m}$ and $b_{32} = \unit[30.0]{Ns/m}$, 
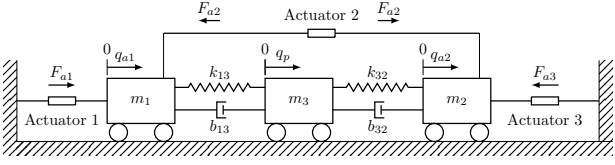
\begin{figure}[!t]
\centering
\scalebox{.6}{
\begin{tikzpicture}[every node/.style={draw,outer sep=0pt,thick}]
\tikzstyle{spring}=[thick,decorate,decoration={zigzag,pre length=0.3cm,post length=0.3cm,segment length=6}]
\tikzstyle{damper}=[thick,decoration={markings,
  mark connection node=dmp,
  mark=at position 0.5 with
  {
    \node (dmp) [thick,inner sep=0pt,transform shape,rotate=-90,minimum width=10pt,minimum height=4pt,draw=none] {};
    \draw [thick] ($(dmp.north east)+(2pt,0)$) -- (dmp.south east) -- (dmp.south west) -- ($(dmp.north west)+(2pt,0)$);
    \draw [thick] ($(dmp.north east)+(0,2pt)$) -- ($(dmp.north west)+(0,-2pt)$);
  }
}, decorate]
\tikzstyle{actuator}=[thick,decoration={markings,
  mark connection node=act,
  mark=at position 0.5 with
  {
    \node (act) [thick,inner sep=0pt,transform shape,rotate=-90,minimum width=5pt,minimum height=0.6cm,draw=none] {};
    \draw [thick] (act.north east) -- (act.south east) -- (act.south west) -- (act.north west) -- (act.north east);
  }
}, decorate]
\tikzstyle{ground}=[fill,pattern=north east lines,draw=none,minimum width=0.75cm,minimum height=0.3cm]

\node (ground) [ground,anchor=north west,minimum width=13.2cm] {};
\draw (ground.north east) -- (ground.north west);
\node (wall) [ground,anchor=north east,rotate=-90,minimum width=1.8cm] at (ground.north west) {};
\draw (wall.north east) -- (wall.north west);
\node (corner) [ground,anchor=east,minimum width=0.3cm] at (ground.west) {};
\node (wall2) [ground,anchor=north east,rotate=-90,minimum width=1.8cm] at (ground.north east) {};
\draw (wall2.south west) -- (wall2.south east);
\node (corner2) [ground,anchor=west,minimum width=0.3cm] at (ground.west) {};
\node (m1) [anchor=west,minimum width=1.5cm,minimum height=1cm,xshift=2cm] at (wall.north) {$m_1$};
\draw [thick] (m1.south west) ++ (0.25cm,-0.2cm) circle (0.2cm)  (m1.south east) ++ (-0.25cm,-0.2cm) circle (0.2cm);
\node (m3) [anchor=west,minimum width=1.5cm,minimum height=1cm,xshift=2cm] at (m1.east) {$m_3$};
\draw [thick] (m3.south west) ++ (0.25cm,-0.2cm) circle (0.2cm)  (m3.south east) ++ (-0.25cm,-0.2cm) circle (0.2cm);
\node (m2) [anchor=west,minimum width=1.5cm,minimum height=1cm,xshift=2cm] at (m3.east) {$m_2$};
\draw [thick] (m2.south west) ++ (0.25cm,-0.2cm) circle (0.2cm)  (m2.south east) ++ (-0.25cm,-0.2cm) circle (0.2cm);
\draw [actuator] (wall.north) -- node[below,draw=none,yshift=-5pt] {Actuator 1} (m1.west);
\draw [spring] (m1.north east) ++ (0cm,-0.2cm) -- node[above,draw=none,yshift=2pt] {$k_{13}$} ($(m3.north west)+(0cm,-0.2cm)$);
\draw [damper] (m1.south east) ++ (0cm,0.3cm) -- node[below,draw=none,yshift=-3pt] {$b_{13}$} ($(m3.south west)+(0cm,0.3cm)$);
\draw ($(m1.north east)+(-0.25cm,0cm)$) -- ($(m1.north east)+(-0.25cm,1cm)$);
\draw ($(m2.north east)+(-0.25cm,0cm)$) -- ($(m2.north east)+(-0.25cm,1cm)$);
\draw [actuator] ($(m1.north east)+(-0.25cm,1cm)$) -- node[above,draw=none,yshift=5pt] {Actuator 2} ($(m2.north east)+(-0.25cm,1cm)$);
\draw [spring] (m3.north east) ++ (0cm,-0.2cm) -- node[above,draw=none,yshift=2pt] {$k_{32}$} ($(m2.north west)+(0cm,-0.2cm)$);
\draw [damper] (m3.south east) ++ (0cm,0.3cm) -- node[below,draw=none,yshift=-3pt] {$b_{32}$} ($(m2.south west)+(0cm,0.3cm)$);
\draw [actuator] (wall2.south) -- node[below,draw=none,yshift=-5pt] {Actuator 3} (m2.east);
\draw [-latex,thick] ($(wall.north)+(0.7cm,10pt)$) -- node[above,draw=none] {$F_{a1}$} +(0.6cm,0);
\draw [-latex,thick] ($(m3.north)+(1.75cm,1cm+8pt)$) -- node[above,draw=none] {$F_{a2}$} +(0.48cm,0);
\draw [-latex,thick] ($(m3.north)+(-1.75cm,1cm+8pt)$) -- node[above,draw=none] {$F_{a2}$} +(-0.48cm,0);
\draw [-latex,thick] ($(wall2.north)+(-1.2cm,10pt)$) -- node[above,draw=none] {$F_{a3}$} +(-0.6cm,0);
\draw ($(m1.north west)+(0,2pt)$) -- node[at end,above,draw=none] {$0$} ($(m1.north west)+(0,12pt)$);
\draw ($(m3.north west)+(0,2pt)$) -- ($(m3.north west)+(0,12pt)$);
\draw ($(m2.north west)+(0,2pt)$) -- node[at end,above,draw=none] {$0$} ($(m2.north west)+(0,12pt)$);
\draw ($(m3.north west)+(0,2pt)$) -- node[at end,above,draw=none] {$0$} ($(m3.north west)+(0,12pt)$);
\draw [-latex,thick] ($(m1.north west)+(0,7pt)$) -- node[above,draw=none] {$q_{a1}$} +(0.8cm,0);
\draw [-latex,thick] ($(m2.north west)+(0,7pt)$) -- node[above,draw=none] {$q_{a2}$} +(0.8cm,0);
\draw [-latex,thick] ($(m3.north west)+(0,7pt)$) -- node[above,draw=none] {$q_{p}$} +(0.8cm,0);
\end{tikzpicture}
}
\caption{Chain of two active trailers and one passive trailer controlled by three actuators}
\label{fig:chainOfTrailers}
\end{figure}

The design procedure proposed in this paper was devised considering a stabilization problem. However, it can be extended to handle trajectory-tracking problems as well. In this example, our objective is to design a control law such that the output $q_a$ tracks the output $q_d$ of a reference model with transfer function $M_d(s)$ and input signal $r_d$, i.e., $q_d(s) = M_d(s) r_d(s)$. To this end, let us define the tracking errors $e_{y}  = q_{a} - q_{d}$ and $e_{v}  = \dot{q}_a - \dot{q}_d$.
To solve the trajectory tracking problem for the active trailers and reformulate the overall system in terms of tracking errors, it is essential to analyze the expected behavior of the passive trailer when trailers 1 and 2 achieve perfect tracking. Specifically, we must determine the passive trailer's desired position and velocity signals. To accomplish this, we can write $\ddot{q}_p$ as follows
\begin{align}
    \label{eq:passive_trailer}
    \ddot{q}_{p} & = -\frac{k_{13}+k_{32}}{m_3} q_{p} + \left[ \begin{array}{cc}
        \frac{k_{13}}{m_3}  & \frac{k_{32}}{m_3} \end{array} \right] (e_y + q_d) + \nonumber \\ & ~~~~-\frac{b_{13}+b_{32}}{m_3} \dot{q}_p + \left[ \begin{array}{cc}
        \frac{b_{13}}{m_3}  & \frac{b_{32}}{m_3} \end{array} \right] (e_v + \dot{q}_d)
\end{align}
Thus, when the tracking errors $e_y$ and $e_v$ converge to zero, the passive trailer's desired position and velocity signals must obey the following differential equation that depends on $q_d$ and its time derivative 
\begin{align*}
    \ddot{q}_{pd} & = -\frac{k_{13}+k_{32}}{m_3} q_{pd} + \left[ \begin{array}{cc}
        \frac{k_{13}}{m_3}  & \frac{k_{32}}{m_3} \end{array} \right] q_d \nonumber \\ & ~~~~-\frac{b_{13}+b_{32}}{m_3} \dot{q}_{pd} + \left[ \begin{array}{cc}
        \frac{b_{13}}{m_3}  & \frac{b_{32}}{m_3} \end{array} \right] \dot{q}_d
\end{align*}
It is important to stress that the desired signals for the passive trailer cannot be used for practical implementations, as they rely on knowledge of system parameters that are considered uncertain. Fortunately, due to the robustness of the MSTA only upper bounds for these signals and their derivatives are needed for the control design. From a theoretical standpoint, its use is justified in rewriting the complete system in terms of tracking errors. To achieve this, we can define the last tracking errors $e_{p1} = q_p - q_{pd}$, and $e_{p2} = \dot{q}_{p} - \dot{q}_{pd}$.

Defining the sliding surface as
\begin{equation*}
    \sigma = -\Gamma \zeta_e + e_{v}, \quad \Gamma = \left[ \begin{array}{cccc}
        -\gamma_{e1} &  0 & 0 & 0 \\
         0 & -\gamma_{e2} & 0 & 0
    \end{array} \right]
\end{equation*}
where $\zeta = [e_y^T~e_p^T]^T \in \mathbb{R}^4 $ and $\sigma \in \mathbb{R}^2$, the overall tracking error system with state variables $\zeta$ and $\sigma$, can be written as \eqref{eq:main_system} where $B$ was already given,  
\begin{align*}
A & = \left[ \begin{array}{cccc} -\gamma_{e1} & 0 & 0 & 0 \\ 0 & -\gamma_{e2} & 0 & 0 \\ 0 & 0 & 0 & 1 \\ \frac{ k_{13}-\gamma_{e1}b_{13}}{m_3} & \frac{k_{32}-\gamma_{e2}b_{32}}{m_3} & -\frac{k_{13}+k_{32}}{m_3} & -\frac{b_{13}+b_{32}}{m_3}  \end{array} \right] \\
E & = \left[ \!\! \begin{array}{cc} 1 & 0 \\ 0 & 1 \\ 0 & 0 \\ \frac{b_{13}}{m_3} &  
\frac{b_{32}}{m_3} \end{array} \!\! \right], ~
D \!=\! \left[ \begin{array}{cc}
         \gamma_{e1}-\frac{b_{13}}{m_1} & 0 \\
         0 & \gamma_{e2}-\frac{b_{32}}{m_2} 
    \end{array} \right] \\
C & \!=\! \left[ \!\! \begin{array}{cccc}
         -\gamma_{e1}^2 \!-\! \frac{k_{13}}{m_1} \!+\! \frac{ \gamma_{e1} b_{13}}{m_1} & 0 & \frac{k_{13}}{m_1} & \frac{b_{13}}{m_1}  \\ 0 & -\gamma_{e2}^2 \!-\! \frac{k_{32}}{m_2} \!+\! \frac{ \gamma_{e2} b_{32}}{m_2} & \frac{k_{32}}{m_2} &  \frac{b_{32}}{m_2} \end{array} \!\! \right]
\end{align*}
and the disturbance is defined by 
\begin{equation*}
     f(t)  = \mathcal{A}_{21} \left[ q_d^T~q_{pd}~\dot{q}_{pd} \right]^T - \ddot{q}_d(t)  + f_d(t) 
\end{equation*}
To illustrate the robustness of the MSTA we consider that actuator $1$ operates within a range from fully functional $\mathcal{F}_1 \!=\! 1$ to completely lost $\mathcal{F}_1 = 0$, i.e. ( $0 \! \leq \! \mathcal{F}_1  \leq 1)$. Taking into account, the mass uncertainties the system has $N = 8$ vertices. Note that all systems matrices are considered uncertain, and each input extreme matrix $B_i$ is non-symmetric. As expected, each uncertain matrix $A \in \mathbb{M}$ satisfies Assumption (A2).

We consider the initial states $q_a(0) = [0.04~-0.06]^T$, $\dot{q}_a(0)  =  [-0.03~0.04]^T$, and the remaining systems' initial conditions equal to zero. We have chosen the design parameters $\omega = 50$, $\gamma \!=\! 4.0$, $\gamma_{e1} = 2.0$, $\gamma_{e2} = 4.0$, and the cost matrices $H = [I_4~0]^T \in \Rf^{6 \times 4}$ and $J = [0~ I_2]^T \in \Rf^{6 \times 2}$. The reference model were selected as 
\begin{align*}
M_d(s) = \frac{125}{(s+5)^3} I_2,~ r_d(t) = \left [ \begin{array}{c} 0.10\sin(0.5t) \\ 0.06\sin(t) \end{array} \right ]
\end{align*}
and, for simulation purpose, we consider that the closed-loop system is subject to the external disturbance $f_d(t) \!=\! [\cos(t)-1~\cos(t)-1]^T$. After a $2$-dimensional search, we have roughly determined $\rho \approx 2.1$ and $\alpha \approx 11.0$, and obtained the following control gains
\begin{align*}
K_0 & = \left[ \begin{array}{rrrr}
   26.0754 & -33.4036 &  -29.2428 & -23.2136 \\
   35.1140 & -50.1011 & -13.8645 & -19.8209 
\end{array} \right] \\
K_1 & = \left[ \begin{array}{rr}
  -49.5226 & -52.2890 \\
   -5.8613 & -49.4501
\end{array} \right] \\
K_2 & = \left[ \begin{array}{rr}
 -109.3804 & -72.3834 \\
   14.8000 & -115.8515
\end{array} \right]
\end{align*}
associated to $\varrho_{rob}^2 = 583.2724$. It can be verified that the overall disturbance $f(t)$ admits the bound $\| \dot{f}(t) \| \leq 6.0$ for all $t \geq 0$. Computing $\delta = 7.8489$ from the designed gains, ensures that the disturbance $f(t)$ is feasible. 

\begin{figure}[t]
\hspace{0.7cm}\includegraphics[height=7.3cm, right]{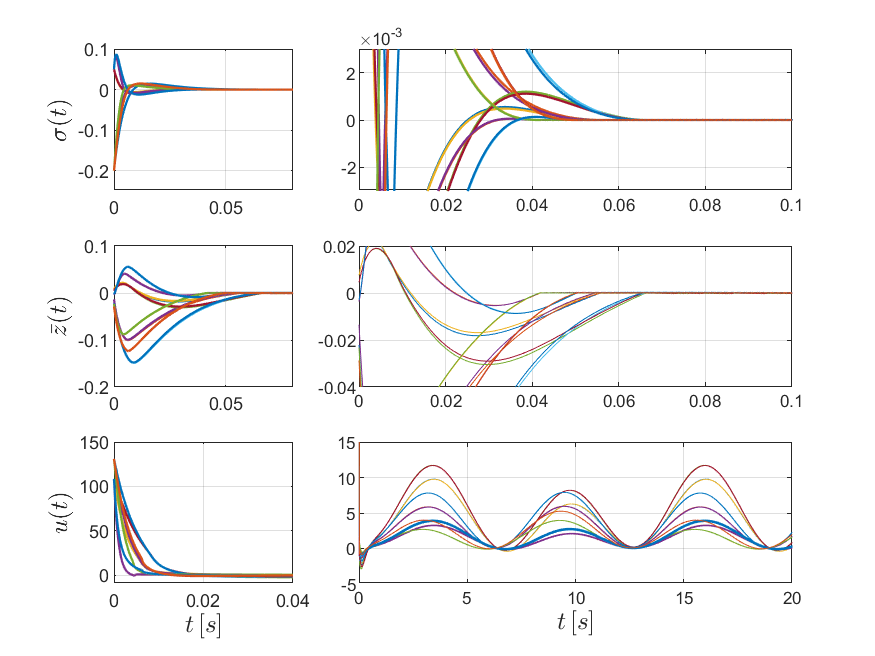}
\caption{Plot of the sliding variables $\sigma(t)$ and $\bar{z}(t)$, and of the control input $u(t)$ for all vertices.}
\label{fig03}
\end{figure}

\begin{figure}[t]
\hspace{0.7cm}\includegraphics[height=7.3cm, right]{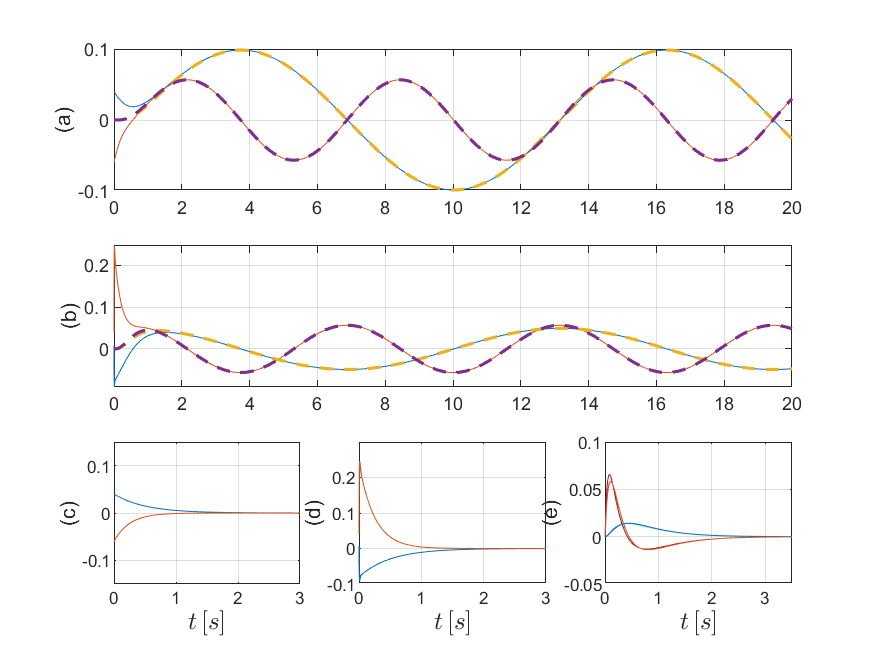}
\caption{Time evolution for all vertices of: (a) $q_a(t)$ (solid) and $q_d(t)$ (dashed); (b) $\dot{q}_a(t)$ (solid) and $\dot{q}_d(t)$ (dashed); (c) $e_y(t)$; (d) $e_v(t)$; (e) $e_p(t)$  }
\label{fig04}
\end{figure}

As illustrated in Figure~\ref{fig03}, a second-order sliding mode is achieved for each extreme matrix within the convex bounded uncertainty domain. Therefore, the result extends to any element of the compact convex polyhedral set $\mathbb{M}$, formed by the convex combination of its vertices. The maximum control effort required at the beginning is less than \unit[150]{N}, which does not appear excessively high. However, after a short time, the effort needed to achieve the desired performance is less than \unit[15]{N} in magnitude. This suggests that the MSTA appears to be very effective in rejecting exogenous disturbances despite the uncertainties and actuator faults.

As can be observed from Figure~\ref{fig04}, exact tracking is achieved as expected. It is noteworthy that the tracking errors for the position and velocity of the three trailers exhibit very similar behavior across all vertices that define the parameter uncertainty.

\section{Conclusion}

In this paper, we have introduced a novel procedure for the robust control design of linear time-invariant systems using a Multivariable Generalized Super-Twisting Algorithm (MGSTA). Our approach successfully addresses robust stability and performance conditions under the presence of convex bounded parameter uncertainty in all matrices of the plant's state-space realization, as well as exogenous  Lipschitz disturbances. These goals are accomplished through the proposal of a novel max-type non-differentiable piecewise-continuous Lyapunov function that has been shown to be well adapted to handle the class of interconnected systems under consideration.

A key feature of our closed-loop system is its sliding mode finite-time convergence, which has been thoroughly examined and validated. The design conditions were formulated as Linear Matrix Inequalities (LMIs), and we demonstrated that these can be efficiently solved using existing computational tools. Interestingly, global stability was achieved with constant controller gains despite the presence of uncertain internal dynamics of the plant.

To showcase the practical applicability and robustness of the proposed LMI-based approach, we developed a fault-tolerant MGSTA controller for a mechanical system with three degrees of freedom. The results illustrate the efficacy of our methodology, highlighting its potential for broader application in robust control design.

Overall, this work contributes to the field by providing a robust and efficient MGSTA Lyapunov/LMI design framework for general linear systems. Finite-time convergence and fault tolerance are ensured in the presence of convex  parameter uncertainties and external disturbances. Future research topics that appear as natural follow-up are full order output feedback synthesis and reaching time optimal control design.
 
\balance 
 
\begin{ack}                               
The authors would like to thank the National Council for Scientific and Technological Development (CNPq/Brazil), the Coordination for the Improvement of Higher Education Personnel (CAPES/Brazil), Finance code 001, and FAPERJ, under Grant E-26/204.669/2024, for partial financial support along the development of this research.
\end{ack}

\bibliographystyle{plain}         
\bibliography{autosam,BIB, MSTA_LMI, vsmrac}           

\end{document}